\newtheorem{theorem}{Theorem}
\newtheorem{lemma}{Lemma}
\newtheorem{remark}{Remark}
\newtheorem{proposition}{Proposition}
\theoremstyle{definition}
\newtheorem{definition}{Definition}
\newtheorem{example}{Example}
\theoremstyle{corollary}
\theoremstyle{assumption}
\newtheorem{corollary}{Corollary}
\def\BibTeX{{\rm B\kern-.05em{\sc i\kern-.025em b}\kern-.08em
    T\kern-.1667em\lower.7ex\hbox{E}\kern-.125emX}}
\begin{document}
\title{Laplacian Flows in Complex-valued Directed Networks: Analysis, Design, and Consensus}
\author{Aditi Saxena, \IEEEmembership{Student Member, IEEE}, Twinkle Tripathy, \IEEEmembership{Senior Member, IEEE} and Rajasekhar Anguluri, \IEEEmembership{Member, IEEE}
\thanks{$^{1}$Aditi Saxena is a research scholar with the Department of Electrical Engineering, Indian Institute of Technology Kanpur, India. (email: {\tt\small saditi23@iitk.ac.in)}}%
\thanks{$^{2}$Twinkle Tripathy is with the Department of Electrical Engineering, Indian Institute of Technology Kanpur, India. (email: 
{\tt\small ttripathy@iitk.ac.in)}}%
\thanks{$^{3}$Rajasekhar Anguluri is with the Department of Computer Science and Electrical Engineering,  University of Maryland, Baltimore County, MD 85281, USA. (e-mail: {\tt\small rajangul@umbc.edu})}
 }
\maketitle
\begin{abstract}
In the interdisciplinary field of network science, a complex-valued network, with edges assigned complex weights, provides a more nuanced representation of relationships by capturing both the magnitude and phase of interactions. Additionally, an important application of this setting arises in distribution power grids. Motivated by the richer framework, we study the necessary and sufficient conditions for achieving consensus in both strongly and weakly connected digraphs. The paper establishes that complex-valued Laplacian flows converge to consensus subject to an additional constraint termed as `real dominance' which relies on the phase angles of the edge weights. 
Our approach builds on the complex Perron-Frobenius properties to study the spectral properties of the Laplacian and its relation to graphical conditions. Finally, we propose modified flows that guarantee consensus even if the original network does not converge to consensus. Additionally, we explore diffusion in complex-valued networks as a dual process of consensus and simulate our results on synthetic and real-world networks.
\end{abstract}
\begin{IEEEkeywords}
Complex matrices, Multi-agent systems, Consensus, Laplacian flows, Complex-valued networks.
\end{IEEEkeywords}
 \section{Introduction}\label{sec:introduction}
A Laplacian flow describes the state evolution of a vector-valued continuous linear time-invariant homogeneous system, where the state matrix is the negative of a Laplacian associated with an underlying finite-vertex network. The choice of Laplacian is governed by applications including the human brain network; power, water, and gas networks; societal groups; robotics; and transportation. In the literature, important results have been established on network flows based on network and matrix-theoretic conditions to achieve properties, including controllability, observability, asymptotic convergence, and importantly \textit{consensus}. Nonetheless, the edge-weights in a network are real-valued in many studies. We depart from this custom by studying networks with complex-valued edges; hereafter, \emph{complex-valued networks}.

The central mathematical object in the study of flows is the map $e^{-Lt}$, where $t \geq 0$ and $L \in \mathbb{C}^{n \times n}$ is the Laplacian (Section \ref{sec:preliminaries}). For real-valued, non-negative Laplacians, many properties of this map follow from standard results, including the Perron-Frobenius theory for dominant eigenvalues and the positive semi-definiteness of $L$. One important property is the eventual exponential positivity (or non-negativity) of $e^{-Lt}$ (see \cite{fontan2022multiagent,fontan2021properties}) that allows to study the steady-state behavior of flows using the network reachability and eigen-spectrum of $L\in \mathbb{R}^{n\times n}$.

The desirable properties of real-valued digraphs do not carry over in a straightforward manner to complex-valued Laplacian networks. For example, complex-symmetric Laplacians $L = L^\top \in \mathbb{C}^{n \times n}$ may have eigenvalues that are not real or positive (see Ex.~\eqref{ex:unstable eigenvalues}), and $e^{-Lt}$ may not exhibit eventual exponential behavior. Thus, classical reachability fails to capture the null-space of $L$ corresponding to its zero eigenvalue, unlike in the real-valued setting. These intriguing properties are intrinsic to complex-valued dynamics and are absent in networks with non-negative real weights. The goal of this work is to formalize these properties and develop suitable generalizations of eventual positivity to the complex domain, enabling analogs of known results for real-valued Laplacians (see~\cite{olfati2007consensus,fontan2022multiagent}).

\smallskip 

\vspace{0.3mm}


\textit{Motivation}: 
Our interest in complex-valued networks is due to two applications: (i) distribution power grids and (ii) opinion dynamics in social networks. Operations in power grids, such as static state estimation, stability, and control, depend on the complex-valued bus admittance matrix. Traditionally, only its imaginary part (which is real-valued) has been used to simplify analysis. However, recent research has highlighted the need to consider the complex-valued admittance matrix in distribution networks for network reconstruction and power-flow problems \cite{Stevenlow}. It is worth acknowledging the extensive work on the mathematical theory of multi-port electrical circuits and network synthesis from the past century \cite{anderson2013network, seshu1961reed, wang2024first}, where the admittance matrix can have complex values or polynomials. 

In social networks, specifically concerning opinion dynamics, complex-valued edges allow agents (nodes) to be described beyond simple binary relations like friendship or foes. These relationships can be achieved by expressing edge-weights in the polar form (\( r e^{\iota \beta} \)), where the phase angle \( \beta \in (-90^{\circ},90^{\circ})\) captures the nature and strength of the relationship \cite{tian2024structuralbalance}. Finally, complex-valued networks are employed to study frustration in digraphs \cite{gong2021directed}. 
Other areas where complex-valued networks play a key role include quantum computation \cite{kubota2021quantum, mukai2020discrete, mn2015continuous, godsil2023discrete};  
 electrical circuits \cite{muranova2020electrical}; complex-valued neural networks\cite{kobayashi2010nnl}; and communication networks \cite{tong2019mimo}. In image processing, complex values provide an extra rotational information of the visual data \cite{koteswar2023fccns}. In signal networks, complex-valued edge-weights encode magnitude and phase transfer characteristics of interferometer networks \cite{krawciw2024small}. 
 Finally, see \cite{bottcher2024complex} for a survey covering various aspects of complex-valued networks for network science.
\smallskip 

 \textit{Literature Review}:
Consensus problems have been the mainstay in the literature of multi-agent networks \cite{li2017consensus,olfati2007consensus,zhang2020consensus,pham2019consensus}. Applications range from quantum, power, and brain networks to coordination in autonomous vehicles \cite{olfati2007consensus,garindistributed,fontan2021properties,dorfler2012synchronization}. At the heart of consensus problem is Laplacian flows. 
Ref.~\cite{olfati2007consensus} and the subsequent work set the stage for studying Laplacian flows by highlighting that non-negative matrices (e.g., M-matrix; Perron and stochastic matrices) arise in modeling both large-scale engineering and communication systems. 
Further, \cite{fontan2022multiagent} provided normality as a sufficient condition for achieving consensus in signed digraphs using the Perron-Frobenius properties of real-valued matrices. 

{Studies on complex-valued networks for analyzing network dynamics, including consensus, is limited within the broader field of network science and virtually absent in control theory. A few exceptions exist. Ref.\cite{reff2012spectral} studies Hermitian Laplacians associated with complex unit gain graphs and derives spectral bounds. Ref.\cite{tian2024structuralbalance} extended structural notions such as balance, antibalance, and unbalance to the Hermitian setting, along with random walk dynamics and a spectral clustering method. In Ref.\cite{lin2013leader}, complex edge weights were used in a leader–follower framework to model planar formations via rotated lines of sight. Refs.\cite{dong2014complex} and  \cite{DONG20161} explored singularity conditions for non-Hermitian Laplacians using a generalized notion of balance, where a cycle is positive if the product of its edge weights is real and positive~\cite{dong2015consensus}.

Most existing studies focus on undirected or Hermitian networks. Here, we relax those assumptions and consider general non-Hermitian digraphs. While prior works have addressed such cases, they often rely on balance assumptions. In contrast, our analysis applies to both balanced and unbalanced complex-valued networks. This enables us to investigate broader questions: Can non-trivial consensus emerge in unbalanced complex digraphs? Do the conditions align with the real-valued case? And if not, what accounts for the differences?} 

\subsection{Contributions}
This research builds on our preliminary studies on complex-valued Laplacian and pseudoinverse flows \cite{saxena2024realeventualexponentialpositivity,saxena2024flowscomplexvaluedlaplacianspseudoinverses} in undirected and weight-balanced digraphs\footnote{ A digraph is weight-balanced if each node’s in-degree equals its out-degree.}, having non-Hermitian Laplacians. For these digraphs, the eigenvector corresponding to the zero eigenvalue being both real and positive analogous to the case of the real-valued Laplacian, allowed us to develop sufficient conditions for consensus. We build upon this prior work to accommodate non weight-balanced digraphs. Moreover, we relax the signed edge weight conditions and allow our imaginary part of edge-weights to be positive or negative. We summarize our main contributions: 
\begin{enumerate}[label=(\roman*)]
 \item Refs. \cite{fontan2022multiagent,fontan2021properties} masterfully highlighted the role of eventually exponential property of $L \in \mathbb{R}^{n \times n}$ for flows. Building on this line of work, we introduce real eventually exponentially positive (rEEP) and negative property (rEENN) for complex-valued matrices, not necessarily Laplacians. We characterize these properties in terms of the dominant eigenvalues and the associated eigenvectors, a complex-valued analogue to the Perron-Frobenius theory.
\item For the flow system described in Eqn.~\eqref{eq:Lflows}, we demonstrate that the negated complex-valued Laplacian is rEEP for strongly connected digraphs and rEENN for weakly connected digraphs. These results depend on easily verifiable conditions on the angles of the edge-weights (Eqn.~\eqref{eq:tan}) and the (real) dominance of eigenvectors corresponding to the zero eigenvalue of $-L\in \mathbb{C}^{n\times n}$.

\item We give necessary and sufficient conditions to achieve consensus in strongly and weakly connected digraphs that satisfy the phase angle and dominance conditions. For digraphs that do not satisfy these conditions, we propose an edge modification method which yields in networks that achieve consensus. Our method is in the spirit of the pole-placement approach for achieving asymptotic stability in the linear dynamical systems. 



\end{enumerate}
To the best of our knowledge, this work presents the first systematic analysis of complex-valued flows, extending core concepts and tools that have, until now, been restricted to real-valued flows. We present numerous examples throughout the paper to both validate our theoretical results and to: (i) reveal the counterintuitive behavior of complex-valued matrices, and (ii) contrast these findings with their real-valued counterparts.
\section{Preliminaries and Problem Formulation}
\label{sec:preliminaries}
Let $\mathbb{C}^{n \times n}$ denote the space of $n\times n$ complex-valued matrices. For $n=1$, the space is $\mathbb{C}$. Let ${\Re}(m) \text{ and } {\Im}(m)$ denote the real and imaginary parts of $m \in \mathbb{C}$, respectively. The imaginary unit is denoted by $\iota=\sqrt{-1}$.  A diagonal matrix is denoted by $\operatorname{diag}(d_{11},..,d_{nn})$, where $d_{ii}\in \mathbb{C}$. The right half of the complex- and real-planes  are $\mathbb{C}_{+} = \{ {\Re}(m) \geq 0, \text{ for all } m \in \mathbb{C}\}$ and $\mathbb{R}_{+} = \{ m \geq 0, \text{ for all } m \in \mathbb{R}\}$, respectively, where the open right-half of complex plane is denoted as ORHP. The $n$-dimensional vectors $\mathbb{1}_n$ and $\mathbb{0}_n$ are all ones and zeros vectors.

\smallskip 
\textit{Matrix Algebra}: A matrix $M \in \mathbb{C}^{n \times n}$  is said to be real non-negative if $\Re(m_{ij})\geq0$ for all the entries 
(hereafter $\Re(M)\geq 0$). A matrix is real eventually non-negative (rENN hereafter) if $\Re(M^{k})\geq 0 $ for some $ k\geq k_{0}$. Real positive and real eventually positive matrices (rEP hereafter) satisfy the above inequalities with strict inequality. $M^{H}$ and $M^{\top}$ denote the complex conjugate transpose and transpose of $M$. 
    The eigenspectrum of $M$ is $\operatorname{spec}(M)=\left \{ \lambda_{1},\lambda_{2},...,\lambda_{n} \right \}, \text{ where } \lambda_{i}$ is the $i^{th}$ eigenvalue of $M$. The Jordan decomposition is given by $M = VJW^H$, where $V$ and $W$ are the matrices of right (column) eigenvectors and of left (row) eigenvectors, respectively. Without loss of generality we let $\lambda_1$ to be dominant: $|\lambda_1|\geq |\lambda_j|$, for $j\ne 1$. Denote $v\in V$ and $w \in W $ to be the dominant right and left eigenvectors corresponding to $\lambda_1$; and these vectors are normalized, $w^H v=1$. The spectral abscissa of $M$ is the maximum of the real parts of all the eigenvalues and is denoted as $\lambda_s$. A complex matrix is marginally stable if $\Re(\lambda_i(M)) \leq 0$ and has corank $d$ if the dimension of the kernel space of $M$ is $d$.  
A $\lambda_{i}$ is semi-simple if algebraic multiplicity equals geometric multiplicity and is simple if both the multiplicities are one. A permutation matrix $P$ is a square binary matrix having a single entry one in each row and column, and all other entries are zeros \cite{vargamatrix}. 

We recall a few standard results on Perron-Frobenius (hereafter PF) properties for complex-valued matrices that play a crucial role in our analysis. More details are in \cite{varga2012,rugh2010cones}. 
\begin{definition}\label{def1:PF}(\textbf{\textit{Complex PF Property\cite{varga2012}}})
 A matrix $M\in\mathbb{C}^{n \times n}$ has the complex Perron-Frobenius property if its dominant eigenvalue $\lambda_{1}$ is real and positive. The associated (non-zero) right eigenvector ${v}\in \mathbb{C}^n$
is such that ${\Re}({v}) \geq \mathbb{0}_n$.
\end{definition}
\begin{definition}\label{def2:strong PF}(\textbf{\textit{Strong Complex PF Property\cite{varga2012}}}) A matrix $M \in \mathbb{C}^{n\times n}$ has the strong complex Perron-Frobenius property if its dominant eigenvalue $\lambda_{1}$ is real, positive and simple with $\lambda_{1}>\left|\lambda_{i}\right|$ for all $i \in \left \{2,...,n \right \}$. The right eigenvector ${v}\in \mathbb{C}^n$ associated with $\lambda_1$ is such that  $\Re({v})>\mathbb{0}_n$.
\end{definition}
\begin{lemma}\label{lemma:PF and cone}(\textbf{\textit{Cone contraction property\cite{rugh2010cones}}})
A bounded linear operator\footnote{A bounded linear operator $T:V \longrightarrow W$ on two normed linear spaces $V$ and $W$ is such that $|| Tx|| \leq c ||x||$ for all $x \in V$ for some $c>0$.} $T$ has a simple and positive dominant eigenvalue if and only if it is a strict contraction of a regular $\mathbb{C}-$cone.\footnote{A subset $\mathcal{C}$ is a closed complex cone if it is closed and $\mathbb{C}$-invariant (i.e.\ $\mathcal{C} = \mathbb{C} \mathcal{C}$) and $\mathcal{C} \neq \{0\}$. Further, a closed complex cone $\mathcal{C}$ is \emph{proper} if it contains no complex planes, $i.e.,$ if $x$ and $y$ are independent vectors, then $\operatorname{span}\{x, y\} \nsubseteq \mathcal{C}$.
\textcolor{black}{A proper closed complex cone is denoted as a $\mathbb{C}-$cone. A regular cone is a cone that is both inner regular (has nonempty interior) and outer regular (has bounded width)
 A complex linear map $T:E_1 \longrightarrow E_2$ is called a cone contraction if it reduces the cone distance between points in the mapped vector space. Interested readers may refer \cite{rugh2010cones} for details.}}  
\end{lemma}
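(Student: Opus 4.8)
The statement is an ``if and only if'' imported from \cite{rugh2010cones}, so the plan is to reproduce the two implications through the machinery of complex cone gauges, treating the hardest analytic estimates as black boxes from that reference and concentrating on how a spectral gap and a cone contraction translate into one another. The unifying device is that a $\mathbb{C}$-cone $\mathcal{C}$ carries a projective gauge $d_{\mathcal{C}}$ --- the complex analogue of the Hilbert/Birkhoff metric --- for which the projectivization $(\mathcal{C}\setminus\{0\})/{\sim}$ is a complete metric space precisely because $\mathcal{C}$ is \emph{regular}, and a linear $T$ with $T\mathcal{C}\subseteq\mathcal{C}$ acts as a (weak) contraction of $d_{\mathcal{C}}$; ``strict contraction'' means its Lipschitz constant on the image is strictly below $1$. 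The spectral claim is thereby converted into a contraction-mapping statement in projective space.

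$(\Leftarrow)$ Assume $d_{\mathcal{C}}(Tx,Ty)\le\eta\,d_{\mathcal{C}}(x,y)$ on $\mathcal{C}\setminus\{0\}$ with $\eta<1$, so the $d_{\mathcal{C}}$-diameter of $T(\mathcal{C})$ is finite. By regularity the induced map on the complete projective quotient is a genuine Banach contraction, which yields a unique fixed direction: $Tv=\lambda_{1}v$ for some $v\in\mathcal{C}\setminus\{0\}$ and $\lambda_{1}\neq0$. Since $v$ and $Tv=\lambda_{1}v$ both lie in $\mathcal{C}$ and the gauge is built from a linear functional that is positive on $\mathcal{C}$, one obtains $\lambda_{1}>0$. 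The contraction rate then controls the remaining spectrum in the standard way, giving $|\lambda|\le\eta\,\lambda_{1}<\lambda_{1}$ for every other spectral value $\lambda$ of $T$, while uniqueness of the fixed direction forces $\lambda_{1}$ to be simple; hence $T$ has a simple, positive, strictly dominant eigenvalue.

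$(\Rightarrow)$ Conversely, let $\lambda_{1}>0$ be simple with $|\lambda_{j}|\le\theta\lambda_{1}$, $\theta<1$, for all other spectral values, and let $v$ be the associated right eigenvector. Using the Riesz spectral projection for $\{\lambda_{1}\}$, split the space into a $T$-invariant sum $\mathbb{C}v\oplus E'$ with the spectral radius of $T|_{E'}$ at most $\theta\lambda_{1}$. Take $\mathcal{C}_{\rho}=\overline{\{\,zv+h:\ z\in\mathbb{C},\ h\in E',\ \|h\|\le\rho|z|\,\}}$; for $\rho$ small this is a proper, regular $\mathbb{C}$-cone (no complex plane inside, nonempty interior, bounded width). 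For $x=zv+h\in\mathcal{C}_{\rho}$ we have $Tx=\lambda_{1}zv+Th$ with $\|Th\|\le c\,((\theta+\varepsilon)\lambda_{1})^{N}\|h\|$ after $N$ iterates; picking $N$ large, $T^{N}$ sends $\mathcal{C}_{\rho}$ compactly into itself and strictly contracts $d_{\mathcal{C}_{\rho}}$. One then upgrades this to a strict contraction for $T$ itself, most cleanly by replacing the norm on $E'$ with an adapted (Gelfand) norm in which $\|T|_{E'}\|<\lambda_{1}$ outright, so that $T\mathcal{C}_{\rho}$ is already compactly contained in $\mathcal{C}_{\rho}$ in one step.

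The genuine obstacle is the analytic core rather than the algebra: showing that ``$T$ maps $\mathcal{C}$ into itself with finite image diameter'' is \emph{exactly} equivalent to the $d_{\mathcal{C}}$-Lipschitz constant being $<1$, and that this constant pins down the spectral gap. This is where the fine properties of Rugh's gauge are indispensable --- its semicontinuity, the completeness of the projective quotient under regularity, and the two-sided comparison between $d_{\mathcal{C}}$-balls and norm-balls --- and I would invoke \cite{rugh2010cones} for those lemmas instead of reproving them. A secondary point worth flagging is that regularity cannot be dropped: without it the projective metric need not be complete and the fixed-direction argument in $(\Leftarrow)$ can break down, which is exactly why the hypothesis is stated for a \emph{regular} $\mathbb{C}$-cone.
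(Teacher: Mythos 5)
The paper does not actually prove this lemma: it is imported as background directly from \cite{rugh2010cones}, so there is no in-paper argument to compare yours against. Your sketch does follow the architecture of Rugh's own proof (projective gauge $d_{\mathcal{C}}$, completeness of the projective quotient under regularity, Banach fixed point for the $(\Leftarrow)$ direction; Riesz spectral projection plus the cone $\mathcal{C}_{\rho}=\{zv+h:\|h\|\le\rho|z|\}$ with an adapted norm for $(\Rightarrow)$), and the $(\Rightarrow)$ half is essentially correct modulo the gauge lemmas you black-box.

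The gap is your justification of the positivity of $\lambda_{1}$ in $(\Leftarrow)$. You assert that ``the gauge is built from a linear functional that is positive on $\mathcal{C}$,'' but no such functional can exist on a $\mathbb{C}$-invariant cone: if $m(x)>0$ then $m(\iota x)=\iota\, m(x)\notin\mathbb{R}_{+}$, while $\iota x\in\mathcal{C}$ by $\mathbb{C}$-invariance. This is not cosmetic. Because $\mathcal{C}=\mathbb{C}\mathcal{C}$, the operator $e^{\iota\theta}T$ is a strict contraction of exactly the same regular $\mathbb{C}$-cone for every $\theta$, yet its dominant eigenvalue is $e^{\iota\theta}\lambda_{1}$; so ``strict contraction of a regular $\mathbb{C}$-cone'' can only ever yield a spectral gap --- a simple eigenvalue $\lambda_{1}\in\mathbb{C}\setminus\{0\}$ strictly dominating the rest of the spectrum in modulus --- and never positivity of $\lambda_{1}$ on its own. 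Positivity needs an additional normalization or hypothesis (e.g.\ that $T$ preserves a real sub-cone, or that the relevant eigenvalue is known to be real a priori, as happens for the translates $dI-L$ with real $d$ used later in the paper). You should either add such a hypothesis to the $(\Leftarrow)$ direction or weaken its conclusion to ``simple dominant eigenvalue of maximal modulus''; as written, the positive-functional step fails and with it the claimed equivalence.
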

\begin{definition}\label{def:real EEP}(\textbf{\textit{Real Eventually Exponentially Positive matrix (rEEP)\cite{saxena2024realeventualexponentialpositivity}}}) A matrix $M \in \mathbb{C}^{n\times n}$ is 
rEEP if there exists an exponential index $t_{0}$ such that $\Re(e^{M t})>0$ for all time $t \geq t_{0}$.
 \end{definition}
 \begin{definition}\label{def:real EENN}(\textbf{\textit{Real Eventually Exponentially Non-Negative matrix (rEENN)}}) A matrix $M \in \mathbb{C}^{n\times n}$ is 
 rEENN if there exists $t_{0}$ such that $\Re(e^{M t})\geq 0$ for all time $t \geq t_{0}$.
 \end{definition}
 \begin{definition}\label{def5}(\textbf{\textit{Irreducible Matrix \cite{vargamatrix}}})
For $n\geq 2$, an irreducible matrix $M\in \mathbb{C}^{n\times n}$ is such that there does not exist an $n\times  n$ permutation matrix $P$ such that 
\begin{equation}\label{eq:irreducible form}
 PMP^{T}=\begin{bmatrix}
M_{11} & M_{12}\\ 
 0& M_{22} 
\end{bmatrix},
\end{equation}
where $M_{11}$ is an $r\times r$ submatrix and $M_{22}$ is an $(n-r)\times(n-r)$ submatrix where $1\leq r<n$. 
\end{definition}

\textit{Graph Theory}:  
We represent a network by the tuple $\mathcal{G}(A)=(\mathcal{V},\mathcal{E},A)$, where $\mathcal{V}$ and $\mathcal{E}$ are the sets of vertices and edges, respectively, and  $A \in  \mathbb{C}^{n \times n}$ is the adjacency matrix. If $A=A^\top$ (i.e., complex symmetric), then $\mathcal{G}(A)$ is undirected, otherwise it is directed (or digraph).
A digraph is weakly connected if its undirected version is connected and it is strongly connected if there exists a directed path between any pair of vertices. In digraphs, out-degree of a vertex is the sum of edge-weights of all the outgoing edges, whereas in-degree is the sum of edge-weights of incoming edges. Collectively for all nodes, the degree matrices are given by $D_{out}=\operatorname{diag}(A\mathbb{1}_{n})$ and $D_{in}=\operatorname{diag}(A^{T}\mathbb{1}_{n})$. 
 A node is globally reachable if all the other nodes have a directed path to that node. A node with zero out-degree is sink whereas the source node has zero in-degree \cite{bullo2018lectures}.

There are several ways to define Laplacian matrices for complex-valued networks. In \cite{tian2024structuralbalance,dong2015consensus}, the degree of a node equals the sum of the magnitudes of edge-weights and the Laplacian is $L=D_{out}-A$, where $A$ is the complex adjacency matrix and the diagonal $D_{out}$ consists of the magnitudes of degree nodes. We define our Laplacian in a similar way but the degree matrix is complex-valued rather than considering the magnitude of edge-weights. The definition is motivated by physical and power networks where the diagonal entries of the Laplacian matrix are complex-valued \cite{Networksgraphtheory2018}.  

\subsection{Problem Formulation and Main Result}
Consider a complex-valued digraph $\mathcal{G}(A)$ with $n$ nodes with complex-valued edge-weights. In our analysis, we make use of polar as well as rectangular form representation of the complex number. Thus for $A=(a_{ij}) \in \mathbb{C}^{n\times n}$, we let 
$a_{ij}=r_{ij} \angle \beta_{ij}$, where $r\geq 0$ is the magnitude and $\beta_{ij} \in (-90^{\circ},90^{\circ})$ ({\color{black} measured in degrees}) is the phase angle.  
The phase angles quantify the strength of interactions among the nodal agents. For example, when $\beta_{ij}=0$ we have the unsigned real-valued network. In the rectangular form $a_{ij}=z_\text{real}+z_\text{imag}\iota $.

Let $x(t) \in \mathbb{C}^n$ be the state vector and consider the (generalized) Laplacian flow system
\begin{equation}\label{eq:Lflows}
    \dot{x}(t)=-\bar{L}x(t), \quad t \geq0, 
\end{equation}
where $\bar{L}:\mathbb{C}^{n\times n}\to \mathbb{C}^{n\times n}$. We primarily consider $\bar{L}=L,$ and $\bar{L}=L_m,$ (Laplacian of a modified network) and focus on the steady state behavior of the state trajectories of Eqn. \eqref{eq:Lflows}. A comprehensive study concerning the pseudoinverse Laplacian flows ($\bar{L}=L^{\dagger}$) is provided in \cite{saxena2024flowscomplexvaluedlaplacianspseudoinverses}.

Our primary result establishes the necessary and sufficient conditions for achieving consensus in complex-valued strongly connected and weakly connected digraphs. Precisely, consensus ensures all components of the state $x(t)$ to be nonzero and identical as $t \to \infty$. The final consensus value may take arbitrary $z \in \mathbb{C}$ depending on the initial conditions. 

The existing results for (real-valued) signed and unsigned networks are readily inapplicable for complex-valued digraphs. As such we develop the required machinery based on (i) spectral properties of the (complex-valued) Laplacian matrices, (ii) complex Perron-Frobenius theory, and (iii) the stabilization theorem in Lemma \ref{lemma:existence of D}. 
\section{Eventual Exponential Properties of Complex-Valued Matrices}
\label{sec:complex matrices} 
We begin studying the spectral properties of complex-valued matrices by establishing relationships between the Perron-Frobenius properties in Defs. \ref{def1:PF} and \ref{def2:strong PF} and the rEEP and rEENN notions in Defs. \ref{def:real EEP} and \ref{def:real EENN}. Our approach relies on the spectral decomposition of complex matrices and is inspired from \cite{varga2012}. These relationships serve as a starting point for analyzing (generalized) Laplacian flows in strongly and weakly connected digraphs.
Let $\lambda_1$ be the dominant eigenvalue of $M \in\mathbb{C}^{n\times n}$ and $v \in \mathbb{C}^n$ be the corresponding dominant right eigenvector. Define
\begin{subequations}
\begin{align}
\label{eq:PFmatrix}
\hspace{-2mm} \mathcal{P} &\!=\! \{M \in\mathbb{C}^{n\times n}: \lambda_1(M) \in \mathbb{R}_{+}  \text{ is simple}\},\\
\label{eq:strongPFmatrix}
\hspace{-2mm} \mathcal{PF} &\!=\! \{M \in\mathbb{C}^{n\times n}:~ \lambda_1(M) \in \mathbb{R}_{+}\text{ is simple and } \Re{(v)}\!>\!0\}.
\end{align}
\end{subequations}
 The inequality  $\Re{(v)}>0$ in Eqn.~\eqref{eq:strongPFmatrix} is element-wise. The set $\mathcal{P}$ in Eqn.~\eqref{eq:PFmatrix} is a subset of matrices satisfying the complex PF property in Def.~\ref{def1:PF}. Finally, $\mathcal{PF}$ in Eqn.~\eqref{eq:strongPFmatrix} is the set of matrices satisfying the strong complex PF property in Def.~\ref{def2:strong PF}.

 {\color{black}We recall the classes of rEP or rENN matrices: the former class satisfies $\Re(M^{k})> 0 $ and the latter class needs $\Re(M^{k})\geq 0$, for some $ k\geq k_{0}$. These matrices play a key role in analyzing system theoretic concepts such as reachability and holdability (see \cite{friedland1978inverse} for details). In order to establish rEP or rEENN properties of matrices, we introduce the notion of \emph{real dominance} of eigenvectors.
\begin{definition}\label{def:real dominance}(\textbf{\textit{Real dominance}}) A vector $z\in \mathbb{C}^n$ is \emph{real dominant} if $\Re({z})\geq\left | \Im({z}) \right|$. Equivalently, the phase angle of each entry of $z$ lies in the range $[-45^{\circ},45^{\circ}]$.
\end{definition}
When the inequality holds strictly, i.e., $\Re(z) > |\Im(z)|$, the vector is referred to as \emph{strictly real dominant}.
Next, we develop the conditions to relate strong complex PF property (codified in $\mathcal{PF}$) with rEP and rEEP (Def. \ref{def:real EEP}).

\begin{theorem}\label{thm:stronglemma}
Let $M \in \mathbb{C}^{n \times n}$. Suppose $\lambda_s$ is the spectral abscissa of $M$; $v$ and $w$ are the corresponding dominant right and left eigenvectors, respectively, which satisfy real dominance (Def. \ref{def:real dominance}) and the normalisation condition $w^Hv=1$. Then, the following statements are equivalent for some scalar $d\geq 0$:
\begin{enumerate}[label=(\roman*)]
\item $ (M+dI) \text{ and } (M+dI)^{H} \in \mathcal{PF}$.
\item $(M+dI)$ is rEP.
\item $M$ is rEEP.
\end{enumerate}  
\end{theorem}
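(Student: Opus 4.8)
The plan is to prove the cyclic chain $(i)\Rightarrow(ii)\Rightarrow(iii)\Rightarrow(i)$, using the Jordan/spectral decomposition as the common tool and isolating one computation that recurs throughout. Write $N:=M+dI$ and $\mu:=\lambda_s+d$, and keep $v,w$ as the designated dominant right/left eigenvectors of $M$ (equivalently of $N$), normalised by $w^{H}v=1$ and real dominant by hypothesis. The recurring fact is: whenever $\mu\in\mathbb{R}_{+}$ is simple and dominant (in modulus for the powers $N^{k}$, in real part for the flow $e^{Nt}$), both $N^{k}/\mu^{k}$ and $e^{Nt}/e^{\mu t}$ converge to the rank-one matrix $vw^{H}$, whose $(p,q)$ entry equals $\Re(v_{p}\overline{w_{q}})=\Re(v_{p})\Re(w_{q})+\Im(v_{p})\Im(w_{q})$; combining $\Re(v),\Re(w)>0$ (from membership in $\mathcal{PF}$) with the real dominance $\Re(v)\ge|\Im(v)|$, $\Re(w)\ge|\Im(w)|$ (Def.~\ref{def:real dominance}) forces $\Re(vw^{H})>0$ entrywise. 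This is the sole place real dominance enters, and it is indispensable, since otherwise the cross terms $\Im(v_{p})\Im(w_{q})$ may dominate $\Re(v_{p})\Re(w_{q})$. With this in hand, $(i)\Rightarrow(ii)$ is immediate: membership of $N$ in $\mathcal{PF}$ makes $\mu$ real, positive, simple and strictly modulus-dominant (Def.~\ref{def2:strong PF}), so $N^{k}/\mu^{k}\to vw^{H}$, and since $\Re(vw^{H})>0$ and $\mu^{k}>0$ we get $\Re(N^{k})>0$ for all large $k$, i.e. $M+dI$ is rEP.

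For $(ii)\Rightarrow(iii)$ I would first promote rEP of $N$ to the strong complex PF property. From $\Re(N^{k})>0$ for $k\ge k_{0}$ one also has $\Re\big((N^{H})^{k}\big)=\big(\Re(N^{k})\big)^{\top}>0$, so $N^{H}$ is rEP as well; the complex Perron--Frobenius machinery, via the cone-contraction characterisation of a simple positive dominant eigenvalue (Lemma~\ref{lemma:PF and cone}; cf.~\cite{varga2012,rugh2010cones}), then places both $N$ and $N^{H}$ in $\mathcal{PF}$, so $\mu$ is real, positive, simple, strictly modulus-dominant, with $\Re(v),\Re(w)>0$. A real positive modulus-dominant eigenvalue is automatically the spectral abscissa, hence $e^{Nt}/e^{\mu t}\to vw^{H}$ and $\Re(e^{Nt})>0$ for $t$ large; finally $e^{Mt}=e^{-dt}e^{Nt}$ with $e^{-dt}>0$ yields $\Re(e^{Mt})>0$ for large $t$, i.e. $M$ is rEEP.

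For $(iii)\Rightarrow(i)$, suppose $\Re(e^{Mt})>0$ for $t\ge t_{0}$. Fixing any $t^{*}\ge t_{0}$ and setting $B:=e^{Mt^{*}}$, we get $\Re(B^{k})=\Re(e^{Mkt^{*}})>0$ for every $k\ge1$, and likewise for $B^{H}=e^{M^{H}t^{*}}$, so $B$ and $B^{H}$ carry the strong complex PF property. Consequently $e^{\lambda_s t^{*}}$ is a real, positive, simple, strictly modulus-dominant eigenvalue of $B$ with eigenvector $v$ satisfying $\Re(v)>0$, and $\Re(w)>0$ from $B^{H}$; requiring $e^{\lambda_s t}\in\mathbb{R}_{+}$ for all $t\ge t_{0}$ forces $\Im(\lambda_s)=0$, so $\lambda_s$ is real, simple, and the unique rightmost eigenvalue of $M$. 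Choosing $d\ge0$ large enough that $\lambda_s+d$ is also the strictly modulus-dominant eigenvalue of $M+dI$ (possible precisely because $\lambda_s$ is the unique rightmost eigenvalue) then gives $M+dI\in\mathcal{PF}$ with eigenvector $v$ and $(M+dI)^{H}=M^{H}+dI\in\mathcal{PF}$ with eigenvector $w$, which is $(i)$.

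The main obstacle I anticipate is the bootstrapping step inside $(ii)\Rightarrow(iii)$, mirrored in $(iii)\Rightarrow(i)$: upgrading real-eventual-positivity of a complex matrix to the full strong complex PF structure. This is where Lemma~\ref{lemma:PF and cone} and the complex cone machinery of \cite{rugh2010cones} do the heavy lifting, and it is genuinely more delicate than Perron's theorem for nonnegative real matrices, because the relevant positivity set $\{z:\Re(z)\ge|\Im(z)|\}$ is not $\mathbb{C}$-invariant and one must pass to the proper closed complex cones of \cite{rugh2010cones}. A secondary difficulty is selecting the shift $d$ so that the rightmost eigenvalue of $M$ simultaneously becomes modulus-dominant for $M+dI$, together with the bookkeeping of strict versus non-strict inequalities: the rEP/rEEP conclusions are strict, which is why $v,w$ must be \emph{strictly} real dominant at the $\pm45^{\circ}$ boundary and why simplicity of $\lambda_s$ is assumed throughout.
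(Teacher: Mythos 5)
Your proposal is correct and follows essentially the same architecture as the paper's proof: the Jordan decomposition with the rank-one limit $N^{k}/\mu^{k}\to vw^{H}$ for $(i)\Rightarrow(ii)$, the cone-contraction/complex-PF machinery of \cite{varga2012,rugh2010cones} to recover membership in $\mathcal{PF}$ from eventual real positivity, and the identification of the eigenspaces of $M$, $M+dI$, and $e^{M}$ for $(iii)\Rightarrow(i)$. The one genuine divergence is in $(ii)\Rightarrow(iii)$: the paper splits the exponential series $e^{(M+dI)t}=\sum_{k\le k_0}(\cdot)+\sum_{k>k_0}(\cdot)$ and argues the eventually-positive tail dominates the finite head for large $t$, whereas you first bootstrap rEP of $N=M+dI$ up to the strong complex PF property and then pass to the asymptotic limit $e^{Nt}/e^{\mu t}\to vw^{H}$. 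Your route buys a cleaner quantitative statement (the flow limit is the same rank-one projector that controls the powers) at the cost of invoking the heavier $(ii)\Rightarrow(i)$ machinery twice; the paper's series-splitting argument is more elementary but is stated tersely and implicitly relies on the degree-$(k_0{+}1)$ positive term outgrowing the degree-$k_0$ head. Your explicit entrywise computation $\Re(v_p\overline{w_q})=\Re(v_p)\Re(w_q)+\Im(v_p)\Im(w_q)$ is a useful addition: it isolates exactly where real dominance is needed and correctly flags that the non-strict inequality in Def.~\ref{def:real dominance} only yields $\Re(vw^{H})\ge 0$ at the $\pm45^{\circ}$ boundary, a strictness issue the paper's proof glosses over as well. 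Likewise, both arguments share the (benign) reliance on choosing $d$ large enough that the rightmost eigenvalue becomes strictly modulus-dominant for $M+dI$, which you correctly identify rather than elide.
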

\begin{proof}  
$(i) \implies (ii)$: Consider the Jordan decomposition of $M+dI=V J W^H$, where $J$ comprises Jordan blocks of all the eigenvalues. For a suitable choice of $d\geqslant 0$, $(\lambda_s+d)$ can be the dominant eigenvalue of $M+dI$. Then partition $M+dI$ as
$$
M+dI=\left[v \mid V_{n, n-1}\right]\left[\begin{array}{l|l}
(\lambda_{s}+d) & 0 \\
\hline 0 & J_{n-1}
\end{array}\right]\left[\begin{array}{l}
w^H\\
\hline W_{n-1, n}^H
\end{array}\right].
$$
{\color{black} Since}  $(M+dI) \text{ and } (M+dI)^{H} \in \mathcal{PF}, \text{ we have } \Re(v)>0$ and $\Re(w)>0$ from Eqn. \eqref{eq:strongPFmatrix}. Thus, for some $k\geq 0$, we have
\begin{equation}\label{eq: i implies ii}
\frac{(M+dI)^{k}}{\left(\lambda_{s}+d\right)^{k}}=\left[v \mid V_{n-1, n}\right]\left[\begin{array}{c|c}
1 & 0 \\
\hline 0 & \frac{(J_{n-1})^{k}}{\left(\lambda_{s}+d\right)^{k}}
\end{array}\right]\left[\begin{array}{c}
w^H\\
\hline W_{n-1, n}^H
\end{array}\right],
\end{equation}
where $J_{n-1}=\operatorname{diag}(\lambda_2+d,\ldots,\lambda_n+d)$ 
and $\lambda_{i} \in \operatorname{spec}(M)$.
Taking the limit $k\to \infty$ on the either sides of Eqn. \eqref{eq: i implies ii}, we get
\begin{align*}
\lim_{k \rightarrow \infty} \frac{(M+d I)^{k}}{\left(\lambda_{s}+d\right)^{k}}=v w^H.
\end{align*} 
Owing to the real dominance of $v$ and $w$, equating the real parts yields 
\begin{equation}\label{eq:Matsteadystate}
\lim _{k \rightarrow \infty} \Re  \left ( \frac{(M+dI)^{k}}{(\lambda_{s}+d)^{k}}\right )=\Re({v}{w}^H) > \mathbb{0}_{n\times n}.    
\end{equation}
Eqn. \eqref{eq:Matsteadystate} implies the existence of a positive integer $ k_0>0$ such that $\Re  \left ((M+dI)^{k}\right)>0$ for all $k \geq k_0$; meaning $(M+dI)$ is rEP.

 $(ii)\implies(i):$ Assume that $\Re{((M+dI)^k)}>0 \text{ for all }k \geq k_0$ and define $B=(M+dI)$. The scalar $d\geq 0$ can be suitably chosen such that $B$ becomes a non-nilpotent matrix. Then, it has been shown in \cite[Thm. 2.3]{varga2012} that $B,B^{H}\in\mathcal{PF}$. 
 
$(ii)\implies(iii):$  We need to show $M$ is rEEP, i.e., $\Re (e^{M t})>0$. Consider $e^{Mt}=e^{-dt}e^{(M+dI)t}$, where the scalar $e^{-dt}>0$. Thus, it suffices to prove that there exists $t_0>0$ such that $\Re(e^{(M+dI)t})>0$ for all $t \geq t_0$. 

We know that $e^{(M+dI)t} = \sum_{k=0}^{\infty}\frac{(M+dI)^{k} t^{k}}{k!}$. As per our hypothesis, $(M+dI)$ is rEP; so there exists $k_0\in \mathbb{N}$ such that $\Re\left((M+d I)^{k}\right)> \mathbb{0}_{n \times n}$ for every $k\geq k_0$. We can then re-write the matrix exponential as
\begin{align}\label{exp1}
e^{(M+dI)t} &= \sum_{k=0}^{k_0}\frac{(M+dI)^{k} t^{k}}{k!}+\sum_{k=k_0+1}^{\infty}\frac{(M+dI)^{k} t^{k}}{k!}
\end{align}
where the real part of second summation is positive. Moreover, the real part of the overall summation can be made positive by increasing the value of $t$. Thus, there exists $t_0>0$ such that $\Re (e^{(M+dI) t})>0$ for all $t\geq t_0$; meaning $(M+dI)$ is rEEP. This, in turn, implies that $M$ is rEEP.  
 
$(iii)\implies(i):$  Given that $M$ is rEEP implies that there exists a $k=k_{0}$ such that $\Re((e^M)^k)>0$ for all $k \geq k_{0}$; meaning $e^M$ is rEP. We already know that rEP is equivalent to strong complex PF property. Hence, its dominant eigenvalue, say $e^{\bar{\lambda}}$ is simple and positive. The corresponding eigenvectors $v$ and $w$ have positive real parts. Since $e^{\bar{\lambda}}$ is the maximum modulus eigenvalue, $\bar{\lambda}=\lambda_{s}$ becomes the spectral abscissa of $M$. 

For the matrix $(M+dI)$, we can always choose $d\geqslant 0$ suitably such that $(\lambda_{s}+d) > |\mu_i + d| \geqslant 0$ holds for all $\mu_i \in \operatorname{spec}(M)$. From the Jordan decomposition, it follows that the eigenspaces of $\lambda_s$ of $M$, $(\lambda_s+d)$ of $M+dI$ and $e^{\lambda_s}$ of $e^{M}$ are the same. We already know that the eigenvectors $v$ and $w$ have positive real parts, implying that $(M+dI)$ and $(M+dI)^{H}$ satisfy the strong complex PF property. 
\end{proof}
Thm. \ref{thm:stronglemma} shows that while a matrix $M$ may not by itself satisfy the strong complex PF property, by virtue of its rEEP property, it can be modified slightly to $M+dI$ to satisfy both the strong complex PF and rEP properties. A similar result holds for rENN and rEENN complex matrices that possess the complex PF property defined in $\mathcal{P}$ in Eqn.~\eqref{eq:PFmatrix}. 
\begin{theorem}\label{thm:weaklemma}
Under the hypothesis of Thm. \ref{thm:stronglemma}, the following statements are equivalent for some $d\geq 0$:
\begin{enumerate}[label=(\roman*)]
\item $ (M+dI) \text{ and } (M+dI)^{H} \in \mathcal{P}$.
\item $(M+dI)$ is rENN.
\item $M$ is rEENN. 
\end{enumerate}  
\end{theorem}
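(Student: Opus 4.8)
The statement is the "weak" analogue of Theorem~\ref{thm:stronglemma}, obtained by systematically replacing every strict inequality by a non-strict one, every occurrence of $\mathcal{PF}$ by $\mathcal{P}$, "rEP'' by "rENN'', and "rEEP'' by "rEENN''. The plan is therefore to mirror the four implications $(i)\!\implies\!(ii)\!\implies\!(iii)$, $(ii)\!\implies\!(i)$, $(iii)\!\implies\!(i)$ of the previous proof, checking at each step that the argument survives the weakening of the hypotheses.

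For $(i)\implies(ii)$ I would again take a Jordan decomposition $M+dI = VJW^H$ and choose $d\geq 0$ so that $\lambda_s+d$ is a dominant (maximum-modulus) eigenvalue; since membership in $\mathcal{P}$ only asserts $\lambda_1\in\mathbb{R}_+$ \emph{simple}, I partition off the one-dimensional eigenspace of $\lambda_s+d$ exactly as in Eqn.~\eqref{eq: i implies ii} and pass to the limit $k\to\infty$ to get $\lim_k (M+dI)^k/(\lambda_s+d)^k = vw^H$. Because $\lambda_1$ is only required to be simple and not strictly dominant in modulus, I have to be slightly careful here: other eigenvalues $\lambda_i+d$ may share the modulus $\lambda_s+d$, so the normalized power need not converge; the clean way around this is to invoke \cite[Thm.~2.3]{varga2012} in the same spirit as the $(ii)\!\implies\!(i)$ direction below, or to restrict to the regime where $d$ is large enough to make $\lambda_s+d$ the \emph{unique} maximum-modulus eigenvalue (a non-nilpotent, in fact eventually-nonnegative, shift), which is all that is needed. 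Real dominance of $v$ and $w$ (Def.~\ref{def:real dominance}) then gives $\Re(vw^H)\geq \mathbb{0}_{n\times n}$ — note the product of two real-dominant vectors has entries with phase in $[-90^\circ,90^\circ]$, hence non-negative real part — so $\Re((M+dI)^k)\geq 0$ for all large $k$, i.e.\ $(M+dI)$ is rENN.

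For $(ii)\implies(i)$ I would set $B = M+dI$, pick $d\geq 0$ making $B$ non-nilpotent, and quote \cite[Thm.~2.3]{varga2012} in its "eventually non-negative'' form to conclude $B,B^H\in\mathcal{P}$ (dominant eigenvalue real, positive, simple; associated eigenvectors with non-negative real part). For $(ii)\implies(iii)$ I reuse the series splitting of Eqn.~\eqref{exp1}: write $e^{Mt}=e^{-dt}e^{(M+dI)t}$ and $e^{(M+dI)t}=\sum_{k=0}^{k_0}\frac{(M+dI)^kt^k}{k!}+\sum_{k>k_0}\frac{(M+dI)^kt^k}{k!}$; the tail has non-negative real part for all $t$, and the finite head can be dominated by taking $t$ large, so $\Re(e^{(M+dI)t})\geq 0$ for $t\geq t_0$, giving $M$ rEENN. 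Finally $(iii)\implies(i)$: $M$ rEENN makes $e^M$ rENN, hence (by the rENN$\Leftrightarrow$complex-PF equivalence just used) $e^{\bar\lambda}$ is simple, positive, and equals the maximum modulus, so $\bar\lambda=\lambda_s$; choosing $d$ so that $(\lambda_s+d)>|\mu_i+d|\geq 0$ for all $\mu_i\in\operatorname{spec}(M)$ and using that $M$, $M+dI$, $e^M$ share the eigenspace of $\lambda_s$, the non-negative-real-part eigenvectors $v,w$ give $(M+dI),(M+dI)^H\in\mathcal{P}$.

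**Main obstacle.** The delicate point is the $(i)\implies(ii)$ limit argument: unlike the strong case, $\mathcal{P}$ does not force $\lambda_1$ to strictly dominate the other eigenvalues in modulus, so the normalized powers $(M+dI)^k/(\lambda_s+d)^k$ need not converge, and one must either argue that the \emph{real part} of the relevant combination is still eventually non-negative, or — as I would do — reduce to the case of a sufficiently large shift $d$ where $\lambda_s+d$ is the unique dominant eigenvalue and then transfer the conclusion back to $M$ via the shared-eigenspace observation used in $(iii)\implies(i)$. The other routine-but-necessary check is that real dominance is preserved under the outer product $vw^H$ so that equating real parts yields a genuine $\geq \mathbb{0}_{n\times n}$ conclusion rather than merely a statement about moduli.
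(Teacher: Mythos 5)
Your proposal follows essentially the same route as the paper, whose entire proof of this theorem is a one-line remark that one repeats the argument of Thm.~\ref{thm:stronglemma} with $\mathcal{P}$ in place of $\mathcal{PF}$ and non-strict inequalities throughout — exactly the four implications you mirror. Your extra care about the possible failure of strict modulus dominance of $\lambda_s+d$ (and the check that $\Re(vw^H)\geq 0$ follows from real dominance of $v$ and $w$) addresses a subtlety the paper silently glosses over, so the proposal is, if anything, more complete than the published argument.
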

\begin{proof}
Similar to the proof of Thm.~\ref{thm:stronglemma} except that
in Eqn. \eqref{eq:Matsteadystate} we use the complex PF property in $\mathcal{P}$ instead of strong complex PF property in $\mathcal{PF}$. {\color{black}Thus, the dominant eigenvectors of $M+dI$ are real non-negative because $(M+dI)\in \mathcal{P}$; and hence, it is rENN which further implies $M$ is rEENN.} For the remaining, the strict $(>)$ inequalities in Thm. \ref{thm:stronglemma} get replaced with non-strict ones $(\geq)$. 
\end{proof}
Thms. \ref{thm:stronglemma} and \ref{thm:weaklemma} state that a complex-valued matrix $M$ is rEEP or rEENN only if the translated matrix $(M+dI)$ belongs to $\mathcal{PF}$ and $\mathcal{P}$, respectively. This key characterization is at the heart of our proofs to the results in the next section.


\section{Generalized Laplacian Matrices: Spectral and Network-Theoretic Results}\label{sec:laplacian}
Laplacian matrices are fundamental in analyzing the structure and dynamics of graphs. While the real-valued Laplacian of a non-negative digraph is diagonally dominant with real, non-negative eigenvalues, the complex-valued Laplacian (for e.g., non-Hermitian digraphs) exhibits markedly different spectral behavior, including the possibility of unstable modes (see Example~\ref{ex:unstable eigenvalues}). These differences necessitate a separate analysis, and we highlight key distinctions in graph-theoretic properties between the real and complex cases.
\begin{example}\label{ex:unstable eigenvalues}
Consider 
\begin{equation*}
L=\begin{bmatrix}
250+960\iota & 0  & -250-960\iota \\
-173-984\iota & 173+984\iota & 0\\ 
0 & -87.2-996\iota & 87.2+996\iota \\
\end{bmatrix},
\end{equation*}  
Here $\operatorname{spec}(L)=\{1107.7+1321\iota, 0, -597.5+1618.3\iota \}$ has one eigenvalue in the left-half of the complex-plane. This would not be the case if we make all the imaginary parts zero. In that case, $L$ would correspond to the Laplacian matrix of a real-valued, non-negative digraph, and hence the eigenvalues are $\operatorname{spec}(L_{real})=\{255+122\iota, 0, 255-122\iota\}$. Thus, the spectral properties of complex-valued Laplacians can differ significantly from those of their real-valued counterparts.
\qed 
\end{example}
From Sec. \ref{sec:preliminaries}, it is clear that $L\mathbb{1}_n=\mathbb{0}_n$, similar to the case of the real-valued, non-negative diagraphs. 
Hence, the right eigenvector $v$ corresponding to zero eigenvalue belongs to the $\operatorname{span}\{\mathbb{1}_n\}$, where $v=\alpha \mathbb{1}_n \text{ for all } \alpha \in \mathbb{C}$. However, both the right and left eigenvectors associated with the zero eigenvalue are of utmost importance for the characterization of complex-valued Laplacian matrices. 

Yet another property particular to complex-valued digraphs is the concept of walks. Recall that in a digraph, if $(A^k)_{ij}\ne0$, for some $k>0$, there exists a $k$-length walk between nodes $i$ and $j$. In real-valued non-negative digraphs, the reverse also holds: a $k$-length walk between $i$ and $j$ implies $(A^k)_{ij} > 0$. But it might not hold for complex-valued directed networks. To see this, let $k=2$ and note $(A^2)_{ij}=\sum_h(A)_{ih}(A)_{hj}$, where the edge-weights $A_{ih}\in \mathbb{C}$. The multiplication of complex-valued edge-weights along a directed walk may result in zero, even when a directed path exists between two nodes. Thus, the property of irreducibility in the complex-valued adjacency matrix differs from that of the real-valued non-negative adjacency matrix as discussed next.
\begin{lemma}\label{lemma:irreducible}
Consider $\mathcal{G}(A)$ with $A \in \mathbb{C}^{n\times n}$ and $n \geq 2$. Then,  
  \begin{enumerate}[label=(\roman*)]
  \item $\mathcal{G}(A)$ is strongly connected iff $A$ is irreducible
    \item If $\sum_{k=0}^{n-1} A^{k} \neq 0$, then $\mathcal{G}(A)$ is strongly connected.
    \item {\color{black} If all components of the $j^{\text {th }} \text{ column of } \sum_{k=0}^{n-1} A^{k} $ are non-zeros, then the $j^{\text {th }}$ node is globally reachable.} 
\end{enumerate}
\end{lemma}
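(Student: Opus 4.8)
\textbf{Proof plan for Lemma~\ref{lemma:irreducible}.}

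The plan is to treat the three parts in order, since each later part leans on the reasoning of the earlier ones. For part (i), I would argue the equivalence through the standard characterization of irreducibility via a block-triangular permutation. If $\mathcal{G}(A)$ is \emph{not} strongly connected, there is an ordered partition $\mathcal{V} = \mathcal{V}_1 \cup \mathcal{V}_2$ with no edges from $\mathcal{V}_1$ to $\mathcal{V}_2$; relabeling the nodes so that $\mathcal{V}_2$ comes first produces exactly the block form in Eqn.~\eqref{eq:irreducible form}, so $A$ is reducible. Conversely, if $A$ is reducible, then $PAP^\top$ has the block-triangular shape, which says there is no edge from the index set $\{1,\dots,r\}$ (corresponding to $M_{11}$'s rows, after accounting for the transpose convention) into its complement, hence no directed path, so the digraph is not strongly connected. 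The only subtlety here is bookkeeping the row/column convention relating $A_{ij}\neq 0$ to the edge $j\to i$ (or $i\to j$), but this is cosmetic; crucially, part (i) is a purely combinatorial statement about the \emph{support} of $A$ and does not suffer from the complex-cancellation phenomenon flagged in the paragraph before the lemma, because a single entry $A_{ij}$ is nonzero iff the edge is present.

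For part (ii), I would use the identity that $\left(\sum_{k=0}^{n-1} A^k\right)_{ij} \neq 0$ certifies a walk of some length $\le n-1$ from (say) $j$ to $i$. If the \emph{entire} matrix $\sum_{k=0}^{n-1} A^k$ is nonzero as a matrix, that only gives one nonzero entry, which a priori certifies just one reachable pair — so the claim as literally stated ("if $\sum_{k=0}^{n-1}A^k \neq 0$ then strongly connected") looks too strong unless "$\neq 0$" is meant entrywise (all entries nonzero). I would therefore read it as: if every entry of $\sum_{k=0}^{n-1} A^k$ is nonzero, then for every ordered pair $(i,j)$ there is a walk from $j$ to $i$ of length at most $n-1$, hence a directed path, hence strong connectivity; here the key point (and the reason the bound $n-1$ is exactly right) is that in an $n$-vertex digraph, if a directed path from $j$ to $i$ exists it can be taken to have length $\le n-1$, and nonzero entries of the powers are a \emph{sufficient} (though in the complex case not necessary) witness for such walks. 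Part (iii) is the same argument localized to one column: if all components of column $j$ of $\sum_{k=0}^{n-1} A^k$ are nonzero, then for every node $i$ there is a walk $i \to \cdots \to j$, i.e. every node has a directed path to $j$, which is precisely the definition of $j$ being globally reachable.

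The main obstacle — and the place the proof must be careful — is the \emph{one-directional} nature of the walk/nonzero-entry correspondence in the complex setting, emphasized in the paragraph preceding the lemma: a nonzero entry of $A^k$ implies a $k$-walk, but a $k$-walk need not give a nonzero entry, because products of complex edge-weights can cancel. This means parts (ii) and (iii) only go one way (nonzero column $\Rightarrow$ reachability), and I should not attempt a converse there; the statements are deliberately phrased as sufficient conditions. For part (i), by contrast, the equivalence is genuine precisely because irreducibility is defined on the sparsity pattern, not on the numerical values — so I would make explicit that "$A_{ij}=0$" in the definition of the permuted block form refers to the structural zero, and that strong connectivity likewise depends only on which edges are present. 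A clean way to package all three is to first prove part (i), then observe $\sum_{k=0}^{n-1} A^k$ and $\sum_{k=0}^{n-1} |A|^k$ (the latter built from the support) have the same zero pattern is \emph{not} true — instead I would just directly invoke: a nonzero entry in position $(i,j)$ of $\sum_{k=0}^{n-1}A^k$ forces a walk, and combine with the elementary graph fact that reachability within $n$ vertices is witnessed by a path of length $\le n-1$; no appeal to Perron–Frobenius is needed for this lemma.
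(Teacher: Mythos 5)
Your proposal is correct and follows essentially the same route as the paper: part (i) via the standard block-triangular characterization of reducibility (which the paper simply cites), and parts (ii)–(iii) via the one-directional correspondence "nonzero entry of $A^k$ $\Rightarrow$ walk of length $k$," with walks of length at most $n-1$ sufficing. Your reading of "$\sum_{k=0}^{n-1}A^k \neq 0$" as entrywise nonzero is also the interpretation the paper's own contradiction argument implicitly uses, so flagging and resolving that ambiguity is consistent with the intended proof.
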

\begin{proof}
(i) See \cite[ Thm.1.6]{vargamatrix}. 
    
(ii) We prove by contradiction. Let $\mathcal{G}(A)$ be weakly connected. So, $A$ is reducible and there exists a permutation matrix $P$ such that $PAP^{T}$ contains a zero block matrix. Thus, there does not exist a walk between some node pairs. Therefore, the summation $\sum_{k=0}^{n-1} A^{k} $ must have a zero entry, which contradicts the hypothesis that $\sum_{k=0}^{n-1} A^{k}\ne 0$.
Then, there must exist $k$ such that $(A^k)_{ij} \neq 0$ which ensures walks of some lengths between all node pairs $i$ and $j$. Thus, $\mathcal{G}(A)$ is strongly connected.

(iii) Assume that the $j^{\text {th }}$ column of $\sum_{k=0}^{n-1} A^k=I+A+A^{2}+\ldots+A^{n-1}$ is non-zero. Recall that the rows of $A$ represent outgoing edges whereas the columns represent incoming edges.    
Thus, the non-zero entries of the $j^{\text {th }}$ column of $\sum_{k=0}^{n-1}A^k$ imply existence of walks of different lengths from every other node to the $j^{\text {th }}$ node, making it globally reachable.
\end{proof}

\vspace{-0.5mm}
Lemma~\ref{lemma:irreducible} extends known results for real-valued digraphs to the complex domain, with notable distinctions. First, for a strongly connected digraph $\mathcal{G}(A)$ with complex edge-weights, the sum $\sum_{k=0}^{n-1} A^{k}$ is only guaranteed to be nonzero. Second, in the real case, statements $(ii)$ and $(iii)$ are necessary and sufficient conditions, whereas in the complex case, only the sufficiency holds. The example shown below illustrates that the reverse direction in $(iii)$ fails for complex-valued $\mathcal{G}(A)$. 
\begin{example}
Consider the complex-valued digraph $\mathcal{G}(A)$ in Fig.~\ref{fig:Weakly connected digraph with a globally reachable node} with the adjacency matrix:
\begin{figure}
    \centering
    \includegraphics[scale=0.5]{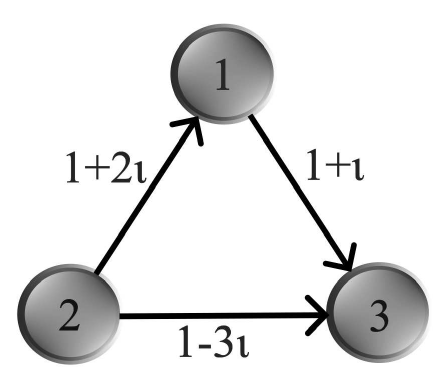}
    \caption{Weakly connected digraph with a globally reachable node.}
     \label{fig:Weakly connected digraph with a globally reachable node}
\end{figure}
\begin{equation*}
A=\begin{bmatrix}
0  & 0  & 1+\iota \\
 1+2\iota & 0& 1-3\iota \\
  0 &0 & 0\\ 
\end{bmatrix}.
\end{equation*}
The third node is globally reachable as it has a directed walk of lengths $1 \text{ and }2$ from nodes $1 \text{ and }2 \text{ to node }3$, respectively. Since $n=3$, the summation matrix is:
\begin{align*}
\sum_{k=0}^2A^k=\begin{bmatrix}
1  & 0  & 1+\iota \\
1+2\iota &1 & 0\\ 
 0 & 0 & 1\\
\end{bmatrix}. 
\end{align*}
Observe that the third column in above the sum has a zero entry. Thus, the reverse direction of statement $(iii)$ in Lemma \ref{lemma:irreducible} does not hold. \qed
\end{example} 

\vspace{-2.0mm}
\subsection{Strongly Connected Digraphs}\label{sec:strongly connected digraphs}

As discussed earlier, standard tools for analyzing consensus in real-valued non-negative digraphs do not directly extend to complex-valued strongly connected digraphs. Establishing consensus in this setting requires a broader framework, which is the focus of this section. We analyze the spectral properties of complex Laplacians $L \in \mathbb{C}^{n \times n}$ via the notion of real eventual exponential positivity (rEEP), which, as shown in Thm.~\ref{thm:stronglemma}, is tied to the \emph{real dominance} of the eigenvectors (Def.~\ref{def:real dominance}). However, not all complex matrices are rEEP or belong to the class $\mathcal{PF}$. To address this, we work with this shifted matrix
\begin{equation}\label{eq:B}
    B = dI - L, \quad \text{where } d > \max_{i=1,\dots,n} |\lambda_i(L)|. 
\end{equation}
Note that $B$ and $L$ have the same eigenspaces and the matrix $B$ plays a pivotal role in stating the conditions for marginal stability of $-L$ in our subsequent results. {\color{black}We also impose mild constraints on the phase angles of the edge weights to ensure \emph{real dominance} of the eigenvectors corresponding to the zero eigenvalue of $L$ which is discussed in the next result.} 
\begin{lemma}\label{lemma:vw}
Consider a digraph $\mathcal{G}(A)$ and let $\theta_{ij}=\angle B_{ij}$ for $1\leq i,j\leq n$. 
Let $v=\alpha \mathbb{1}_n$ (where $\alpha \in \mathbb{C}$) and $w \in \mathbb{C}^n$ be the right and left eigenvector associated with the zero eigenvalue of $L \in \mathbb{C}^{n\times n}$, respectively. Suppose $\Re(\alpha)\geq |\Im(\alpha)|$ and the phase angles \( \phi_{wi} = \angle (w)_i \) satisfy the following relation: 
    \begin{equation}\label{eq:tan}
         \left(\frac{\sum_{j=1}^{n} b_{ji} w_j \sin \left(\theta_{ji}-\phi_{wj}\right)}{\sum_{j=1}^{n} b_{ji} w_j \cos \left(\theta_{ji}-\phi_{wj}\right)}\right) \in [-1, 1] \quad \text{for all } 1 \leq i, j \leq n,
    \end{equation}
    where $(w)_i$ is the $i^{\text{th}}$ entry of $w$. Then, $v \text{ and }w$ are real dominant. 
\end{lemma}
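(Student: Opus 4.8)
### Proof Strategy for Lemma \ref{lemma:vw}

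\textbf{Overview of the approach.} The statement has two parts: showing that $v = \alpha\mathbb{1}_n$ is real dominant, and showing that $w$ is real dominant. The first part is immediate. Since $v = \alpha\mathbb{1}_n$, every entry of $v$ equals $\alpha$, so the hypothesis $\Re(\alpha) \geq |\Im(\alpha)|$ says precisely that each entry of $v$ has phase angle in $[-45^\circ, 45^\circ]$, which is the definition of real dominance (Def.~\ref{def:real dominance}). So the entire content of the lemma is the claim about $w$. The plan is to translate the eigenvector equation for $w$ into a per-entry scalar identity, write each complex quantity in polar form, separate real and imaginary parts, and then show that condition \eqref{eq:tan} is exactly the algebraic encoding of ``$|\tan(\phi_{wi})| \leq 1$,'' i.e. $\phi_{wi} \in [-45^\circ, 45^\circ]$.

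\textbf{Key steps in order.} First I would set up the left-eigenvector equation. Since $w$ is the left eigenvector of $L$ for the zero eigenvalue, $w^H L = 0$, equivalently $w^H B = d\, w^H$ using $B = dI - L$ from \eqref{eq:B}; taking conjugate transposes, $B^H w = \bar d\, w$, so $w$ is a right eigenvector of $B^H$ with (real, positive) eigenvalue $d$. Reading off the $i$th component gives $\sum_{j=1}^n (B^H)_{ij} (w)_j = d\,(w)_i$, i.e. $\sum_{j=1}^n \overline{B_{ji}}\, (w)_j = d\,(w)_i$. Next I would substitute polar forms: write $B_{ji} = b_{ji}\,e^{\iota\theta_{ji}}$ (so $b_{ji} = |B_{ji}| \geq 0$ and $\theta_{ji} = \angle B_{ji}$) and $(w)_j = |w_j|\, e^{\iota \phi_{wj}}$. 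Then $\overline{B_{ji}}\,(w)_j$ has magnitude $b_{ji}|w_j|$ and phase $\phi_{wj} - \theta_{ji}$; I'll keep $b_{ji} w_j$ as shorthand the way the statement does. Equating the argument of the left side with $\phi_{wi}$ (the phase of the right side, since $d > 0$), the real and imaginary parts of the identity read
\begin{align}
\sum_{j=1}^n b_{ji} w_j \cos(\theta_{ji} - \phi_{wj}) &= d\, |w_i| \cos(\phi_{wi}), \label{eq:plan-re}\\
-\sum_{j=1}^n b_{ji} w_j \sin(\theta_{ji} - \phi_{wj}) &= d\, |w_i| \sin(\phi_{wi}). \label{eq:plan-im}
\end{align}
Dividing \eqref{eq:plan-im} by \eqref{eq:plan-re} (the denominator being $d|w_i|\cos\phi_{wi}$, nonzero once we know $|\phi_{wi}| < 90^\circ$ — which follows because $w$ is the complex PF eigenvector and hence $\Re(w) \geq 0$, so the cosine is nonnegative, and it cannot vanish unless $w_i = 0$) yields
\[
-\tan(\phi_{wi}) = \frac{\sum_{j=1}^n b_{ji} w_j \sin(\theta_{ji} - \phi_{wj})}{\sum_{j=1}^n b_{ji} w_j \cos(\theta_{ji} - \phi_{wj})}.
\]
Condition \eqref{eq:tan} then says $|\tan(\phi_{wi})| \leq 1$, i.e. $\phi_{wi} \in [-45^\circ, 45^\circ]$ for every $i$, which is exactly real dominance of $w$. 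Finally I would combine the two parts to conclude that both $v$ and $w$ are real dominant.

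\textbf{Anticipated main obstacle.} The routine calculation is the polar/rectangular bookkeeping, but the genuinely delicate point is justifying that $\cos(\phi_{wi})$ appears and is nonnegative (nonzero) so that the ratio in \eqref{eq:tan} is well-defined and the reduction to $|\tan\phi_{wi}| \leq 1$ is legitimate. This requires invoking that $w$ is the dominant (PF-type) left eigenvector of $L$, hence the dominant right eigenvector of $B^H$ associated with its maximum-modulus eigenvalue $d$, so that $B^H$ (which is, up to the shift, built from a strongly connected / irreducible structure) has $\Re(w) \geq 0$ by the complex Perron–Frobenius property (Def.~\ref{def1:PF}); I would need to state this dependence carefully, possibly as a standing assumption that $w$ is normalized to lie in the Perron cone. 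A secondary subtlety is handling entries with $w_i = 0$: there condition \eqref{eq:tan} is vacuous or interpreted via the limiting argument, and real dominance ($\Re(0) \geq |\Im(0)|$) holds trivially, so those indices can be excluded from the division step without loss.
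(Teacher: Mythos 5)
Your proposal matches the paper's proof essentially step for step: write the left-eigenvector relation $w^H B = d\,w^H$ for the shifted matrix $B=dI-L$, expand entrywise in polar form, equate phases to obtain $\tan(-\phi_{wi})$ as the ratio in Eqn.~\eqref{eq:tan}, and conclude $\phi_{wi}\in[-45^{\circ},45^{\circ}]$, with the claim for $v=\alpha\mathbb{1}_n$ following immediately from $\Re(\alpha)\geq|\Im(\alpha)|$ exactly as you say. The one obstacle you flag --- that $|\tan\phi_{wi}|\leq 1$ only pins down $\phi_{wi}$ modulo $180^{\circ}$, so one must separately justify $\cos\phi_{wi}\geq 0$ (and nonvanishing of the denominator) before dividing --- is genuine, but the paper's own proof passes over it in silence, so your treatment is, if anything, more careful than the source.
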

\begin{proof}
Consider $B \text{ and }d$ as in Eqn. \eqref{eq:B}. Then, $v \text{ and }w$ are the eigenvectors corresponding to the eigenvalue $d$. If $\alpha$ is real dominant in $v=\alpha \mathbb{1}_n$, then $v$ is real dominant. Further, consider $w^H B =\lambda_1 w^H$ where $\angle\lambda_1=0^{\circ}$: 
$$
\begin{gathered}\label{eq: AppA}
\left(w^H \right)B=\lambda_1 \left(w^H \right), \\
w^H
\left[
\begin{array}{lll}
b_{11} \angle \theta_{11} & b_{12} \angle \theta_{12} & b_{13} \angle \theta_{13} \\
b_{21} \angle \theta_{21} & b_{22} \angle \theta_{22} & b_{23} \angle \theta_{23} \\
b_{31} \angle \theta_{31} & b_{32} \angle \theta_{32} & b_{33} \angle \theta_{33}
\end{array}
\right]=
\lambda_1 w^H,
\end{gathered}
$$
where $w^H=[w_1 \angle-\phi_{w1} ~ w_2 \angle -\phi_{w2} ~ w_3 \angle -\phi_{w3}]$, $b_{ij}$ and $w_i$ are the magnitudes, and $\theta_{ij}$ and $\phi_{vj}$ are the phase angles. We expand the matrix-vector product and equate the phase on both sides of the equation to get the following relation:
$$
\begin{aligned}
\operatorname{tan} \left(-\phi_{wi}\right) & =
 \left(\frac{\sum_{j=1}^{n} b_{ji} w_j \sin \left(\theta_{ji}-\phi_{wj}\right)}{\sum_{j=1}^{n} b_{ji} w_j \cos \left(\theta_{ji}-\phi_{wj}\right)}\right).
\end{aligned}
$$
Thus, if the value of $\operatorname{tan} \left(-\phi_{wi}\right)\text{ lies in }[-1,1]\text{ for every } i=1,2,...,n$ which implies \emph{real dominance} in $w$. 
\end{proof}    
{The characterization in the above lemma pertains to the weaker real dominance condition, i.e., $\Re(\alpha) \geq |\Im(\alpha)|$, which is sufficient for proving stability results in weakly connected digraphs. But to establish similar results for strongly connected digraphs, we need the \emph{strict real dominance} condition: $\Re(\alpha) > |\Im(\alpha)|$, along with the requirement that the value 
in Eqn.~\eqref{eq:tan} lies in the interval $(-1, 1)$.}
 Leveraging the \emph{strict real dominance} of eigenvectors, we finally provide the conditions for ensuring the rEEP property of Laplacian matrices. 
\begin{theorem}\label{thm:non wt.balanced}
Consider a strongly connected digraph $\mathcal{G}(A)$ and let $\theta_{ij}=\angle B_{ij}$ $($defined in Eq. \eqref{eq:B}$)$ for $1\leq i,j\leq n$. 
Let $v=\alpha \mathbb{1}_n, \alpha \in \mathbb{C} 
\text{ and } w \in \mathbb{C}^n$ be the right and left eigenvector associated with the zero eigenvalue of $L \in \mathbb{C}^{n\times n}$, respectively. Suppose that $\alpha$ and the phase angle $ \phi_{wi} = \angle (w)_i $ satisfy Lemma \ref{lemma:vw} with strict real dominance. Then, the following statements are equivalent:
  \begin{enumerate}[label=(\roman*)]
      \item $-L$ is rEEP.
      \item $-L$ is marginally stable of corank $1$.
    \end{enumerate}
\end{theorem}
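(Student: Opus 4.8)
The plan is to establish the equivalence by going through the shifted matrix $B = dI - L$ and applying Theorem 1 (the rEEP $\iff$ strong complex PF characterization). First I would observe that since $d > \max_i |\lambda_i(L)|$, all eigenvalues of $B$ lie in the open right half plane except that the zero eigenvalue of $L$ maps to the eigenvalue $d$ of $B$; moreover $d$ is the eigenvalue of $B$ of largest modulus. So the spectral abscissa of $-L$ is $0$ if and only if $-L$ has corank exactly $1$ with no eigenvalue in the ORHP, which is precisely the statement that $d$ is the simple dominant eigenvalue of $B$ and all other eigenvalues $\mu$ of $B$ satisfy $|\mu| < d$. This reduces the corank/marginal-stability side of the equivalence to a spectral statement about $B$.

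Next I would handle the direction $(ii) \implies (i)$. Assuming $-L$ is marginally stable of corank $1$, the zero eigenvalue of $L$ is simple, so $B$ has $d$ as its simple dominant eigenvalue — i.e., $B$ satisfies the $\mathcal P$ part (simple positive dominant eigenvalue). The associated right eigenvector is $v = \alpha\mathbb 1_n$ and the left eigenvector is $w$; by hypothesis $\alpha$ and the phase angles $\phi_{wi}$ satisfy Lemma 5 with \emph{strict} real dominance, hence $\Re(v) > 0$ and $\Re(w) > 0$ elementwise. Therefore $B$ and $B^H$ both satisfy the \emph{strong} complex PF property, i.e., $B, B^H \in \mathcal{PF}$. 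Now apply Theorem 1 with $M = -L$ and the shift $d$ (so that $M + dI = B$): condition (i) of Theorem 1 holds, hence condition (iii) holds, i.e., $-L$ is rEEP. For the converse $(i) \implies (ii)$, assume $-L$ is rEEP. By Theorem 1, $(M+dI) = B$ and $B^H$ belong to $\mathcal{PF}$, so $d' := \lambda_s(-L) + d$ is the simple dominant eigenvalue of $B$ with $d' > |\mu_i|$ for every other eigenvalue. Since the shift is by $dI$, this forces $\lambda_s(-L) = 0$ to be simple and strictly dominant in the shifted spectrum, so $0$ is a simple eigenvalue of $-L$ and all other eigenvalues have strictly negative real part — that is, $-L$ is marginally stable of corank $1$. (One uses here that $L\mathbb 1_n = \mathbb 0_n$ guarantees $0 \in \operatorname{spec}(-L)$ in the first place, and strong connectivity via Lemma 4 guarantees the algebraic multiplicity of the zero eigenvalue is exactly one, pinning down the corank.)

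The step I expect to be the main obstacle — and the place requiring care — is verifying that the hypotheses genuinely deliver $B, B^H \in \mathcal{PF}$ and not merely $\mathcal P$, and in particular confirming that the \emph{strict} real dominance of $v$ and $w$ is exactly what upgrades the PF property to its strong form needed for rEEP (rather than just rEENN, which is the weak version in Theorem 2). Lemma 5 as stated only gives (weak) real dominance from the interval $[-1,1]$; the remark following Lemma 5 promises the strict version under the open-interval hypothesis, so I would either invoke that remark directly or reprove the strict inequality by carrying the strict bounds through the same phase-angle computation. A secondary subtlety is ensuring that $d$ being strictly larger than the spectral radius of $L$ really does make $d$ the \emph{unique} modulus-maximal eigenvalue of $B$ (so that "dominant" is unambiguous and simple), which follows because $|\mu_i(B)| = |d - \lambda_i(L)| \le d + |\lambda_i(L)| $ is not immediately what we want — rather one argues that $\lambda_i(L) \ne 0$ implies $\Re(\lambda_i(L)) $ could be negative, so instead the cleanest route is: $d$ is an eigenvalue of $B$, and for any other eigenvalue $\mu = d - \lambda$ with $\lambda \ne 0$, marginal stability ($\Re(\lambda)\ge 0$, established as part of $(ii)$ or to be derived in $(i)$) gives $|\mu|^2 = (d - \Re\lambda)^2 + (\Im\lambda)^2 < d^2$ precisely when $\Re\lambda > 0$; the corank-one / simplicity assumption handles the borderline $\Re\lambda = 0$, $\lambda \ne 0$ case by excluding it. I would make this modulus comparison explicit since it is the linchpin connecting "dominant eigenvalue of $B$" to "spectral abscissa of $-L$."
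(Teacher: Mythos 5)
Your direction $(ii)\Rightarrow(i)$ is essentially the paper's argument: shift to $B=dI-L$, use marginal stability of corank $1$ plus strict real dominance of $v=\alpha\mathbb{1}_n$ and $w$ to place $B,B^H\in\mathcal{PF}$, and invoke Thm.~\ref{thm:stronglemma} to conclude rEEP. The gap is in $(i)\Rightarrow(ii)$. From rEEP and Thm.~\ref{thm:stronglemma} you correctly get that $B,B^H\in\mathcal{PF}$, i.e.\ that $\lambda_s(-L)+d$ is the simple dominant eigenvalue of $B$ with a strictly real-positive dominant eigenvector. But the step ``since the shift is by $dI$, this forces $\lambda_s(-L)=0$'' does not follow: knowing $0\in\operatorname{spec}(-L)$ only gives $\lambda_s(-L)\ge 0$, and nothing you have written rules out an eigenvalue of $-L$ with strictly positive real part whose shifted image is the dominant eigenvalue of $B$. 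Your two fallbacks do not close this. The modulus comparison $|d-\lambda|^2<d^2$ requires $\Re(\lambda)>0$ (and in fact $d>|\lambda|^2/(2\Re\lambda)$), which is exactly the marginal stability you are trying to prove in this direction — you flag this circularity yourself but leave it unresolved. And the claim that strong connectivity via Lemma~\ref{lemma:irreducible} pins the algebraic multiplicity of the zero eigenvalue to one is not available here: that implication is a Perron--Frobenius fact for real non-negative Laplacians and is precisely what fails in general for complex-valued ones (cf.\ Ex.~\ref{ex:unstable eigenvalues}).

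The paper closes the gap with the cone-contraction property (Lemma~\ref{lemma:PF and cone}): since $B\in\mathcal{PF}$, $B$ is a strict contraction of a regular $\mathbb{C}$-cone, so its dominant eigenvector is the \emph{unique} eigendirection with entrywise positive real part. Because $L\mathbb{1}_n=\mathbb{0}_n$, the vector $\alpha\mathbb{1}_n$ is an eigenvector of $B$ for the eigenvalue $d$, and by strict real dominance of $\alpha$ it has positive real part; uniqueness then forces it to \emph{be} the dominant eigenvector, hence the dominant eigenvalue of $B$ is $d$ and $\lambda_s(-L)=0$, giving marginal stability of corank $1$. This uniqueness argument is the missing ingredient in your write-up; without it (or some substitute) the $(i)\Rightarrow(ii)$ direction is incomplete. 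A minor additional point: even in $(ii)\Rightarrow(i)$, the specific choice $d>\max_i|\lambda_i(L)|$ in Eqn.~\eqref{eq:B} does not by itself make $d$ the modulus-maximal eigenvalue of $B$ (your inequality ``$<d^2$ precisely when $\Re\lambda>0$'' is false for small $d$); one should say explicitly that $d$ is taken large enough, as permitted by the ``for some $d\ge 0$'' in Thm.~\ref{thm:stronglemma}.
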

The proof is in Appendix. We present a corollary that establishes the relationship between all the entries of the left eigenvector corresponding to the zero eigenvalue. This result is subsequently used to analyze weakly connected digraphs with globally reachable nodes.
\begin{corollary}\label{cor:globally reachable node}
Consider a digraph $\mathcal{G}(A)$ and  $\theta_{ij}=\angle B_{ij}$ $($defined in Eqn. \eqref{eq:B}$)$, for $1\leq i,j\leq n$. 
Let $v=\alpha \mathbb{1}_n, \alpha \in \mathbb{C} \text{ and } w \in \mathbb{C}^n$ be the right and left eigenvector associated with the zero eigenvalue of $L \in \mathbb{C}^{n\times n}$, respectively. Suppose that $\alpha$ and the phase angle $ \phi_{wi} = \angle (w)_i $ satisfy Lemma \ref{lemma:vw}. If $-L$ is rEEP and is equivalently marginally stable of corank $1$, then $\mathcal{G}(A)$ is strongly connected. 
\end{corollary}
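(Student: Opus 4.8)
\textbf{Proof proposal for Corollary \ref{cor:globally reachable node}.}

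The plan is to prove the contrapositive: if $\mathcal{G}(A)$ is \emph{not} strongly connected (i.e. only weakly connected), then $-L$ cannot simultaneously be rEEP and marginally stable of corank $1$. Since we are given that $-L$ \emph{is} rEEP, the burden reduces to showing that weak connectivity forces the corank of $L$ to exceed $1$ — equivalently, that the geometric multiplicity of the zero eigenvalue of $L$ is at least $2$. The right eigenvector $v = \alpha\mathbb{1}_n$ always lies in $\ker L$ by the row-sum-zero property $L\mathbb{1}_n = \mathbb{0}_n$ noted earlier; so I want to exhibit a second, linearly independent kernel vector when $\mathcal{G}(A)$ fails to be strongly connected.

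First I would invoke Lemma \ref{lemma:irreducible}(i): $\mathcal{G}(A)$ not strongly connected means $A$, hence $L$, is reducible, so there is a permutation matrix $P$ with $PLP^\top$ in block upper-triangular form $\left[\begin{smallmatrix} L_{11} & L_{12} \\ 0 & L_{22}\end{smallmatrix}\right]$, where each diagonal block can be taken (by iterating the reduction / Frobenius normal form) to correspond to a strongly connected component of the \emph{condensation} digraph. The key structural observation is that in the condensation there is more than one \emph{sink component} (a component with no outgoing edges to other components) whenever the digraph is not strongly connected but — wait, more carefully: weak connectivity with a single globally reachable node would give a unique sink, so I instead argue directly on the Laplacian. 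Each strongly connected component that is a \emph{sink} in the condensation yields, by restricting to that component's rows and columns, a principal submatrix that is itself a bona fide Laplacian of a strongly connected subgraph with zero row sums, hence singular with its own $\mathbb{1}$-type null vector; padding that null vector with zeros on the remaining coordinates produces a kernel vector of $L$. If there are two or more sink components, these padded vectors are linearly independent (disjoint supports), giving corank $\geq 2$. If there is exactly one sink component but the digraph is still not strongly connected, then some node is not globally reachable; here I would use that the restriction of $L$ to the union of non-sink components is block-upper-triangular with singular diagonal blocks (each being a component Laplacian), forcing an extra kernel direction — this is the delicate case and I expect it to be the main obstacle.

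To handle that delicate case cleanly, the better route is probably to use Theorem \ref{thm:non wt.balanced} as a black box: under the stated hypotheses, ``$-L$ rEEP'' is \emph{equivalent} to ``$-L$ marginally stable of corank $1$,'' but Theorem \ref{thm:non wt.balanced} asserts this equivalence \emph{for strongly connected digraphs}. The corollary is essentially the converse implication at the level of connectivity: I would show that marginal stability of corank exactly $1$ already forces $\mathbb{1}_n \in \ker L$ to be (up to scaling) the \emph{only} kernel vector, and then argue that a weakly-but-not-strongly connected digraph always admits a second one, contradicting corank $1$. So the real technical content is the lemma ``$L$ reducible $\Rightarrow$ $\dim\ker L \geq 2$ or $L$ has a nontrivial Jordan block at $0$,'' and in the marginally-stable case the Jordan-block alternative is excluded by semisimplicity of the zero eigenvalue (which marginal stability of corank $1$ presumes in this paper's usage). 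That reduces everything to the block-triangular eigenvalue bookkeeping.

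The main obstacle, then, is ruling out the scenario where $L$ is reducible yet $\dim\ker L = 1$ with the single kernel vector being $\mathbb{1}_n$ — this can genuinely happen for complex-weighted graphs (e.g. the non-sink blocks $L_{ii}$ being nonsingular, with $\mathbb{1}_n$ surviving in the kernel only because the \emph{global} row sums vanish). To close this, I would argue: if only one strongly connected component is a sink, that component is exactly the set of globally reachable nodes, so \emph{every} node reaches it; but the hypothesis ``$-L$ marginally stable of corank $1$'' combined with rEEP and the Perron–Frobenius machinery of Theorem \ref{thm:stronglemma} / Theorem \ref{thm:non wt.balanced} forces $B = dI - L$ to have a \emph{simple} dominant eigenvalue with a strictly-real-dominant, entrywise-nonzero left eigenvector $w$ (cf. Lemma \ref{lemma:vw} and the strong complex PF property in $\mathcal{PF}$); an entrywise-nonzero $w$ with $w^H L = 0$ is incompatible with $L$ being block-upper-triangular with a \emph{nonsingular} leading block $L_{11}$, since the corresponding leading sub-block of $w$ would then have to vanish. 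Hence $L$ cannot be reducible, i.e. $\mathcal{G}(A)$ is strongly connected. I would present the argument in this order: (1) recall $\mathbb{1}_n \in \ker L$ and semisimplicity of $0$ from corank-$1$ marginal stability; (2) recall from Theorem \ref{thm:non wt.balanced}'s proof that rEEP forces $B \in \mathcal{PF}$, so the left eigenvector $w$ at the zero eigenvalue of $L$ is entrywise nonzero (strong complex PF); (3) assume for contradiction $\mathcal{G}(A)$ is not strongly connected, apply Lemma \ref{lemma:irreducible}(i) to block-triangularize; (4) derive that $w$ must have vanishing entries on the leading block, contradicting step (2); (5) conclude strong connectivity.
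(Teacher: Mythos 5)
Your final argument is correct, but it takes a genuinely different route from the paper's. The paper's proof is purely combinatorial and needs none of the eigenvector machinery: if $\mathcal{G}(A)$ is not strongly connected, $L$ is reducible, so $\widetilde{L}=PLP^\top$ is block upper-triangular; every power $\widetilde{L}^k$ then retains a zero $(2,1)$ block, hence $e^{-Lt}$ has identically zero entries for all $t$, which directly contradicts the definition of rEEP ($\Re(e^{-Lt})>0$ entrywise). Your route instead exploits the spectral consequences of rEEP: via Thm.~\ref{thm:stronglemma}\,(iii)$\Rightarrow$(i), $B$ and $B^H$ lie in $\mathcal{PF}$, so the left null vector $w$ of $L$ has $\Re(w)>0$ entrywise and the zero eigenvalue is simple; block-triangularity then forces $w$ to vanish on the leading block, a contradiction. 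This works, but note two things. First, your opening plan (``weak connectivity forces corank $\geq 2$'') is false — a weakly connected digraph with one globally reachable node has corank $1$ (cf.\ Thm.~\ref{thm:globally reachable}) — and the condensation/sink-component discussion is ultimately dead weight; you correctly abandon it. Second, your step (4) silently assumes $L_{11}$ is nonsingular; to close this you should state explicitly that the second block row of $\widetilde{L}$ has zero row sums (since $\widetilde{L}\mathbb{1}_n=\mathbb{0}_n$), so $L_{22}\mathbb{1}=\mathbb{0}$ makes $L_{22}$ singular, and the algebraic simplicity of the zero eigenvalue then leaves no room for $L_{11}$ to be singular as well — only then does $w_1^H L_{11}=\mathbb{0}^\top$ yield $w_1=\mathbb{0}$. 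With that patch your proof is complete; the paper's version is shorter and avoids the Perron--Frobenius apparatus entirely, while yours has the mild advantage of exposing \emph{why} reducibility is spectrally incompatible with the strong complex PF property of $B$.
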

\begin{proof}
We prove this by contradiction. Suppose that $\mathcal{G}(A)$ is weakly connected. Then, $L$ is reducible (see Lemma 1, \cite{saxena2024realeventualexponentialpositivity}). Then there exists a permutation matrix $P$ such that 
$$
\widetilde{L}=P L P^{T} = \begin{bmatrix}
L_{11} &L_{12} \\ 
0 & L_{22}
\end{bmatrix}.$$ Note that $\widetilde{L}^{l}$, for any $1\leq l\leq n$ has a zero block indicating that there does not exist a walk between certain subset of nodes.
Thus, $\exp(-Lt)\triangleq \sum_{k=0}^{\infty}\frac{(-Lt)^{k}}{k!}$ has a submatrix with all the entries as zero; this is because there do not exist walks of any length between certain node pairs. However, $-L$ is rEEP by assumption; and hence, $\operatorname{exp}(-Lt)$ has all the positive real entries. This contradicts our hypothesis that $L$ is reducible. Thus, $\mathcal{G}(A)$ is strongly connected. 
\end{proof}
\begin{example}\label{ex: Thm1_main_resultholds}
For the network shown in Fig.~\ref{fig:sc graph}, the Laplacian matrix is given in Eqn. \eqref{eq:stableL}.
    \begin{figure}
    \begin{subfigure}{0.23\textwidth}
    \centering
    \includegraphics[scale=0.479]{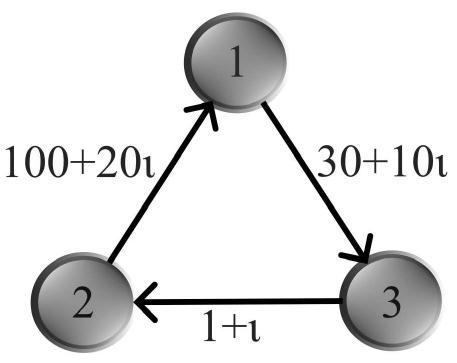}
    \caption{\small A strongly connected network with different edge weights.}
    \label{fig:sc graph}
  \end{subfigure}
    \hspace{-4mm}
  \begin{subfigure}{0.25\textwidth}
    \centering  
\includegraphics[scale=0.17]{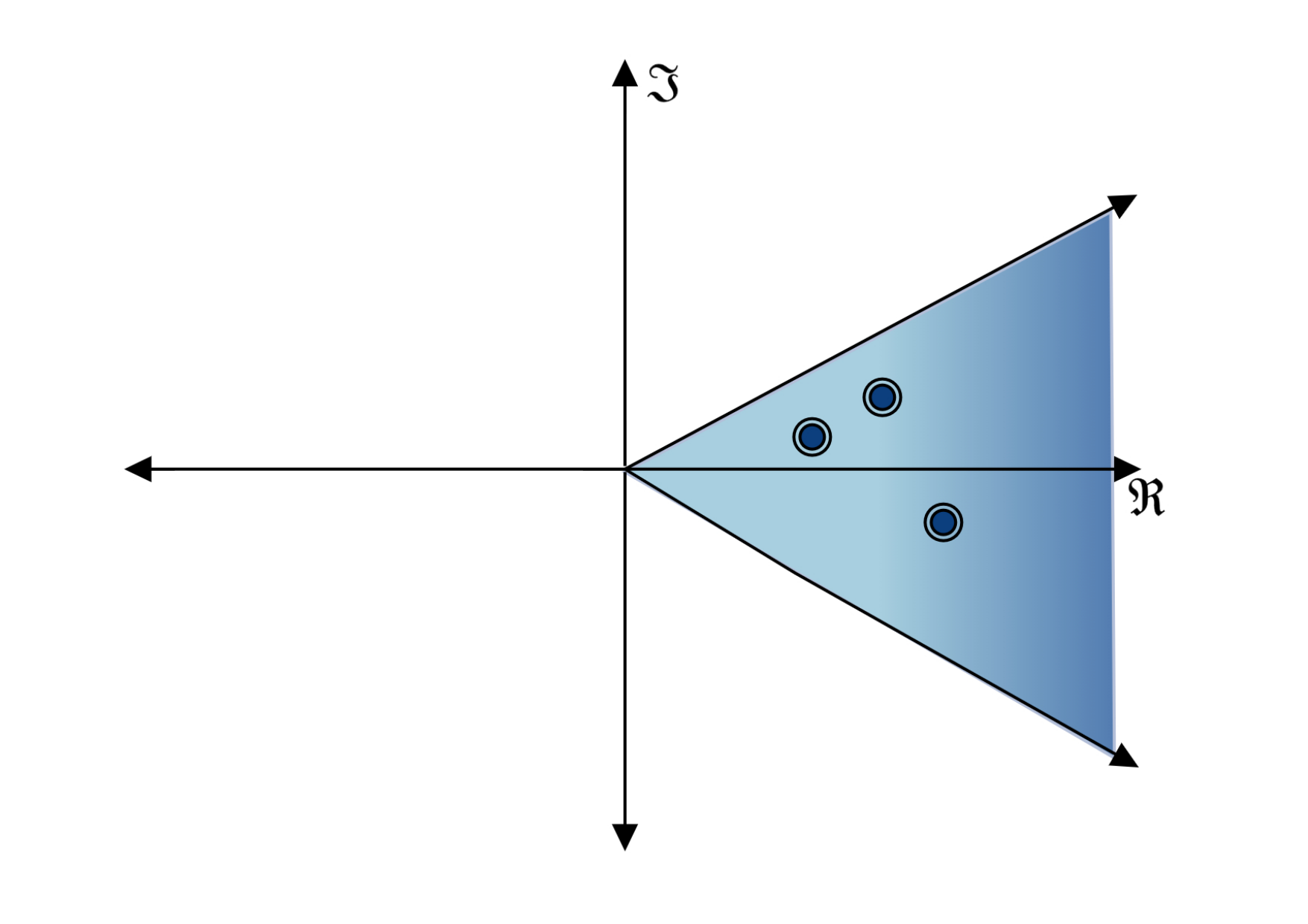}
\caption{\small Complex plane with the shaded region highlighting $\Re(a)\geq \Im(b)$ for any $z=a+b\iota$.}
\label{fig:area of real dominance}
  \end{subfigure}
  \caption{\small A strongly connected network with the right ($v$) and left ($v$) eigenvectors satisfying the assumptions in Lemma~\ref{lemma:vw}. The components of $v \text{ and }w$ are shown as dots in Fig. \ref{fig:area of real dominance}.}
  \label{fig:strongly connected digraph}
\end{figure}


\begin{equation}\label{eq:stableL}
L=\begin{bmatrix}
30+10\iota  & 0  & -30-10\iota \\
  -100-20\iota & 100+20\iota & 0\\ 
  0 & -1-\iota & 1+\iota \\
\end{bmatrix}.
\end{equation}   
Here $v=\alpha \mathbb{1}_{n}$ with $\alpha=0.577$ and $w=[0.067+0.031\iota ~  0.019+0.012\iota ~  1.646 - 0.043\iota]^{\top}$ satisfy \emph{real dominance}. Hence, for this Laplacian, the Lemma~\ref{lemma:vw} holds. Let $B=dI-L \text{ for } d =100$. Additionally, owing to \emph{real dominance} again, $B\in \mathcal{PF}$. Since the eigenvectors of $B$ and $L$ are the same, Lemma \ref{lemma:vw} holds for $v \text{ and }w$. Hence, resorting to Thm. \ref{thm:non wt.balanced}, we claim that $-L$ is rEEP, which is verified numerically by evaluating the matrix exponential at $t = 1$: 
\begin{equation*}\label{eq:reepL}
e^{-Lt}=\begin{bmatrix}
38.5+ 18\iota  & 11.2 + 7\iota  &950-25\iota\\
38.5+ 18\iota & 11.2 + 7\iota & 950-25\iota\\ 
38.5+ 18\iota & 11.2 + 7\iota & 950-25\iota \\
\end{bmatrix},
\end{equation*}
whose real parts of $e^{-Lt}$ are positive. 
\qed 
\end{example}
Ex.~\ref{ex: Thm1_main_resultholds} presents a complex-valued strongly connected digraph $\mathcal{G}(A)$ whose Laplacian satisfies the conditions presented in Lemma~\ref{lemma:vw}. As a result, $-L$ is rEEP. However, Lemma \ref{lemma:vw} does not provide a necessary condition for $-L$ to be rEEP. Ex. \ref{ex:edgemodification} shows (at least numerically) that $-L$ is rEEP even when the left eigenvector $w$ violates Eqn. \eqref{eq:tan}.
 \begin{example}\label{ex:edgemodification}
   Consider the Laplacian matrix as, 
    \begin{equation*}
     L=\begin{bmatrix}
100+14\iota  & 0  & -100-14\iota \\
  -0.55-1.92\iota & 0.55+1.92\iota & 0\\ 
  0 & -38.6-10\iota & 38.6+10\iota \\
\end{bmatrix}.   
    \end{equation*}
Here, $v=\alpha \mathbb{1}_{n}$ with $\alpha=0.577$ and $\angle(w)=[-62.6^{\circ} ~  3^{\circ} ~ -56.1^{\circ}]^{\top}$. The assumption of \emph{real dominance} is not followed in the left eigenvector corresponding to zero eigenvalue. Geometrically, imagine each element of $w$ as a vector in the complex plane. Then both the first and third components of $w$ do not lie in the region $[-45^{\circ},45^{\circ}]$  (Fig. \ref{fig:area of real dominance}). But at $t=1$ we have 
\begin{equation*}
e^{-Lt}=\setlength{\arraycolsep}{1.27pt} \begin{bmatrix}
8.8 + 17\iota  & 964 -57\iota  &  27 + 40\iota\\
8.8 + 17\iota & 964 -57\iota&27 + 40\iota\\ 
8.8 + 17\iota & 964 -57\iota & 27 + 40\iota \\
\end{bmatrix}, 
\end{equation*}
indicating that $-L$ is rEEP even if its eigenvectors are not \emph{real dominant}. \qed 
\end{example}

\vspace{-3.0mm}
\subsection{Weakly Connected Digraphs}\label{sec:weakly connected digraphs}
The previous section established equivalences between rEEP and the zero eigenvalue of $-L$ for strongly connected digraphs. We now establish analogous equivalences between rEENN and the marginal stability of $-L$ for weakly connected digraphs.
\begin{theorem} \label{thm:globally reachable}
Consider a weakly connected digraph $\mathcal{G}(A)$ and  $\theta_{ij}=\angle B_{ij}$ $($defined in Eqn. \eqref{eq:B}$)$, for $1\leq i,j\leq n$. Let $v=\alpha \mathbb{1}_n,~\alpha \in \mathbb{C} \text{ and }w \in \mathbb{C}^n$ be the right and left eigenvectors associated with the zero eigenvalue of $L \in \mathbb{C}^{n\times n}$, respectively. Suppose that $\alpha$ and the phase angle $\phi_{wi} = \angle (w)_i $ satisfy Lemma \ref{lemma:vw}.
If $L$ has corank $1$, then $\mathcal{G}(A)$ has at least a globally reachable node. Additionally, when $-L$ is marginally stable of corank $1$, then $-L$ is rEENN.  
\end{theorem}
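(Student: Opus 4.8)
The statement has two parts, and I would tackle them in order. \textbf{Part 1: $\operatorname{corank}(L)=1 \implies \mathcal{G}(A)$ has a globally reachable node.} I would argue by contrapositive/contradiction: suppose no node is globally reachable. A weakly connected digraph with no globally reachable node has a condensation (the DAG of strongly connected components) possessing at least two distinct ``sink'' components (terminal SCCs with no outgoing edges to other components). Each terminal SCC contributes, by the structure of the Laplacian $L = D_{out}-A$, its own right null vector supported on that component (since restricted to a terminal SCC the row sums of $L$ still vanish — no edges leave it — and within a strongly connected block the corresponding principal submatrix of $L$ has a one-dimensional kernel by the strongly-connected analysis of Section~\ref{sec:strongly connected digraphs}, in particular Lemma~\ref{lemma:irreducible}(i) and Thm.~\ref{thm:non wt.balanced}). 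Two linearly independent such vectors force $\operatorname{corank}(L)\geq 2$, contradicting the hypothesis. Hence at least one globally reachable node exists. I would lean on the permutation/reducibility form $\widetilde L = PLP^\top = \begin{bmatrix} L_{11} & L_{12} \\ 0 & L_{22}\end{bmatrix}$ (as used in Cor.~\ref{cor:globally reachable node}) and iterate the block decomposition down to terminal blocks.

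\textbf{Part 2: $-L$ marginally stable of corank $1$ $\implies$ $-L$ is rEENN.} Here I would invoke Thm.~\ref{thm:weaklemma}: it suffices to show that for the shifted matrix $B = dI - L$ with $d > \max_i |\lambda_i(L)|$ (Eqn.~\eqref{eq:B}), both $B$ and $B^H$ lie in $\mathcal{P}$, i.e.\ satisfy the complex PF property — dominant eigenvalue real, positive, simple, with dominant eigenvector having nonnegative real part. The dominant eigenvalue of $B$ is $d$ (since $L$ is marginally stable, $\Re(\lambda_i(L))\le 0$, so $|\lambda_i(B)| = |d - \lambda_i(L)|$ is maximized and equals $d$ precisely at $\lambda_i(L)=0$; the shift choice $d>\max_i|\lambda_i(L)|$ guarantees $d$ is the strict modulus-maximizer). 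Corank $1$ makes this eigenvalue simple. The right eigenvector is $v=\alpha\mathbb{1}_n$, which is real dominant (hence $\Re(v)\ge 0$) by the hypothesis that $\alpha$ satisfies Lemma~\ref{lemma:vw}; the left eigenvector $w$ is real dominant by the same lemma (this is exactly what Eqn.~\eqref{eq:tan} buys us). So $B, B^H \in \mathcal{P}$, and Thm.~\ref{thm:weaklemma} gives $B$ is rENN, hence $-L = B - dI$ is rEENN (the exponential splits as $e^{-Lt} = e^{-dt}e^{Bt}$ with $e^{-dt}>0$, mirroring the argument in Thm.~\ref{thm:stronglemma}).

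\textbf{Where the difficulty lies.} Part 2 is essentially bookkeeping on top of Thm.~\ref{thm:weaklemma}, so the real work is Part 1 — specifically, making rigorous the claim that each terminal SCC yields an \emph{independent} null vector of the \emph{full} Laplacian $L$ (not just of a principal submatrix). The subtlety is that a null vector of the principal block $L_{22}$ (a terminal SCC) must be extended by zeros on the rest and checked to remain in $\ker L$; this works because the columns of $L$ indexed by a terminal SCC have no entries outside that block (no incoming edges from outside into... — careful: it is the \emph{rows}/out-edges that stay inside, so one checks $L(\text{extended } v) = 0$ via the block-triangular form with the terminal block sitting as a lower-right $L_{22}$ with zero off-diagonal coupling into it from the relevant direction). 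I would also need the complex-valued caveat from Lemma~\ref{lemma:irreducible}: irreducibility of the adjacency block guarantees strong connectivity, but I should confirm the kernel of a strongly-connected-block Laplacian is exactly one-dimensional in the complex setting — which is precisely the content of the strongly connected analysis (Thm.~\ref{thm:non wt.balanced} together with $L\mathbb{1}=0$), so I would cite that rather than re-derive it. A secondary obstacle is handling the degenerate possibility that the phase-angle condition Eqn.~\eqref{eq:tan} interacts with reducibility (e.g.\ $w$ having zero entries on non-terminal components); I would note that real dominance is only required componentwise where $w$ is nonzero and that the argument is unaffected.
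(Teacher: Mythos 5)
Your Part~2 is essentially the paper's own proof: shift to $B=dI-L$, argue that $d$ is the simple dominant eigenvalue of $B$, use Lemma~\ref{lemma:vw} to place $B$ and $B^{H}$ in $\mathcal{P}$, and invoke Thm.~\ref{thm:weaklemma} to get rENN of $B$ and hence rEENN of $-L$. No complaint there beyond what the paper itself glosses over: whether the particular shift $d>\max_i|\lambda_i(L)|$ really makes $d$ the modulus-dominant eigenvalue of $B$ when $L$ has nonzero eigenvalues close to the imaginary axis (one needs $|\lambda|^2<2d\,\Re(\lambda)$, which the stated bound does not ensure); both you and the paper assert this, and it is only guaranteed for a sufficiently large $d$, which the ``for some $d\ge 0$'' phrasing of Thm.~\ref{thm:weaklemma} can accommodate.

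The genuine gap is in Part~1, which the paper disposes of by citing \cite[Thm.~7.4]{bullo2018lectures} and which you attempt to prove directly. Your mechanism --- ``each terminal SCC contributes its own right null vector \emph{supported on that component}'' --- is false for $L=D_{out}-A$. If $v$ equals a kernel vector of the terminal block $L_{SS}$ on $S$ and is zero elsewhere, then the rows of $L$ indexed by $S$ do annihilate $v$ (out-edges of $S$ stay inside $S$), but a row $i\notin S$ with an edge into $S$ gives $(Lv)_i=-\sum_{j\in S}a_{ij}v_j\neq 0$ in general: the \emph{columns} indexed by $S$ have entries outside the block. You half-notice this in your ``difficulty'' paragraph but trail off without resolving it, and the resolution is not cosmetic because the two vectors you exhibit are simply not in $\ker L$. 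The clean repair is to count row ranks instead of exhibiting null vectors: with two terminal SCCs $S_1,S_2$, the $|S_m|$ rows of $L$ indexed by $S_m$ are supported on the columns of $S_m$ and have zero row sums, hence span a space of dimension at most $|S_m|-1$; therefore $\operatorname{rank}(L)\le(|S_1|-1)+(|S_2|-1)+|R|=n-2$, i.e.\ $\operatorname{corank}(L)\ge 2$, which is exactly the contrapositive you need and holds for arbitrary complex weights. (Alternatively one can extend the indicator of $S_1$ to a true null vector by solving for its values on the remaining nodes, but that requires invertibility of the grounded block, which the rank count avoids.)
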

\begin{proof}
Let $\mathcal{G}(A)$ be a complex-valued digraph with a Laplacian matrix $-L$ of corank $1$; this implies that $\mathcal{G}(A)$ has a globally reachable node. The proof is along the same lines as that of the proof of the result \cite[Thm. 7.4]{bullo2018lectures} as the latter does not depend on the values of the edge-weights of the digraph.

Next, consider the translated matrix $B=dI-L$, where $d$ is chosen according to Eqn.~\eqref{eq:B}. Since $-L$ is marginally stable of corank $1$, we know that `$d$' is the real, positive and simple dominant eigenvalue of $B$ (refer to the proof of Thm. \ref{thm:non wt.balanced}). Since the eigenvectors $v \text{ and }w $ satisfy Lemma \ref{lemma:vw}, it follows that $B$ and $B^H$ are elements of $\mathcal{P}$ (see Eqn.~\eqref{eq:PFmatrix}) and not necessarily $\mathcal{PF}$. Applying Thm. \ref{thm:weaklemma} suitably, $-L$ is rEENN.
\end{proof}
We address weakly connected digraphs without globally reachable nodes. In general, right eigenvectors corresponding to the zero eigenvalue of a $L\in \mathbb{C}$ might not lie in $\operatorname{span}\left\{\mathbb{1}_{n}\right\}$ (discussed in Thm. \ref{thm:weakly connected}). Thus, we introduce a phase angle condition on the entries of $v$ for it to exhibit \emph{real dominance}.
\begin{proposition}\label{remark:tan condition on v}
Consider a weakly connected digraph without a globally reachable node. Suppose the phase
angles $\phi_{vi} = \angle v_i$, for all $1\leq i,j\leq n$ satisfy the following relation: 
\begin{equation}\label{eq:tan condition on v}
\operatorname{tan} \left(\phi_{vi}\right)  =
 \left(\frac{\sum_{j=1}^{n} l_{ji} v_j \sin \left(\theta_{ji}+\phi_{vj}\right)}{\sum_{j=1}^{n} l_{ji} v_j \cos \left(\theta_{ji}+\phi_{vj}\right)}\right) \in [-1,1]. 
\end{equation}
Then the right eigenvector $v$ is real dominant.
\end{proposition}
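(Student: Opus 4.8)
The plan is to mirror the structure of the proof of Lemma~\ref{lemma:vw}, but now applied to the right eigenvector $v$ (which need not lie in $\operatorname{span}\{\mathbb{1}_n\}$ here, since the digraph lacks a globally reachable node) and to the equation $Lv = \mathbb{0}_n$ rather than the left eigenvector relation $w^H B = \lambda_1 w^H$. First I would write $v$ in polar form, $(v)_i = v_i \angle \phi_{vi}$ with $v_i \geq 0$, and write $L = (l_{ij} \angle \theta_{ij})$, where $\theta_{ij} = \angle L_{ij}$. (Note the statement writes $\theta_{ij}=\angle B_{ij}$ elsewhere; here the relevant angles are those of $L$ itself, since the defining equation is $Lv=\mathbb 0_n$ with the zero eigenvalue. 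I would keep the notation consistent with Eqn.~\eqref{eq:tan condition on v}, which already uses $l_{ji}$.)

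Next I would expand the $i^{\text{th}}$ component of the vector equation $Lv = \mathbb{0}_n$, or equivalently the transpose relation $v^\top L^\top = \mathbb{0}_n^\top$ read column-wise, to obtain $\sum_{j=1}^n l_{ji} v_j e^{\iota(\theta_{ji}+\phi_{vj})} = 0$ for each $i$. Separating real and imaginary parts and taking the ratio (the imaginary part over the real part) gives $\tan(\phi_{vi}) = \bigl(\sum_j l_{ji} v_j \sin(\theta_{ji}+\phi_{vj})\bigr)\big/\bigl(\sum_j l_{ji} v_j \cos(\theta_{ji}+\phi_{vj})\bigr)$, which is exactly the left-hand side of Eqn.~\eqref{eq:tan condition on v}. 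The hypothesis says this quantity lies in $[-1,1]$; since $\tan(\phi_{vi}) \in [-1,1]$ forces $\phi_{vi} \in [-45^\circ, 45^\circ]$, each entry of $v$ has phase angle in $[-45^\circ,45^\circ]$, which by Definition~\ref{def:real dominance} is precisely the statement that $v$ is real dominant.

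One technical point I would be careful about is the degenerate case where the denominator $\sum_j l_{ji} v_j \cos(\theta_{ji}+\phi_{vj})$ vanishes: then $\tan(\phi_{vi})$ is not finite, $\phi_{vi} = \pm 90^\circ$, and $v$ would fail real dominance — so the condition in Eqn.~\eqref{eq:tan condition on v} implicitly excludes this by requiring the ratio to be a well-defined real number in $[-1,1]$. I would state this explicitly. A second subtlety is the sign/branch ambiguity: writing $\tan(\phi_{vi})$ determines $\phi_{vi}$ only modulo $180^\circ$, so I should note that $v_i \geq 0$ (the magnitude convention) together with the sign of the real part $\sum_j l_{ji} v_j \cos(\theta_{ji}+\phi_{vj})$ pins down the correct branch; equivalently, one simply reads off $\Re((v)_i) \geq |\Im((v)_i)|$ directly from the real/imaginary decomposition rather than through the tangent. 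I expect this branch bookkeeping to be the only real obstacle; the rest is a direct computation identical in spirit to Lemma~\ref{lemma:vw}.
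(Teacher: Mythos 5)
There is a genuine gap in the derivation step. You propose to extract the tangent relation from the zero-eigenvalue equation $Lv=\mathbb{0}_n$ itself, i.e., from $\sum_{j} l_{ji} v_j e^{\iota(\theta_{ji}+\phi_{vj})}=0$. But if that sum is exactly zero, then \emph{both} its real part $\sum_j l_{ji}v_j\cos(\theta_{ji}+\phi_{vj})$ and its imaginary part $\sum_j l_{ji}v_j\sin(\theta_{ji}+\phi_{vj})$ vanish, so the ratio in Eqn.~\eqref{eq:tan condition on v} is $0/0$ and carries no information about $\phi_{vi}$; the phase of the zero complex number is undefined, so there is no phase to equate to $\phi_{vi}$. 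Your own "technical point" about the denominator vanishing is in fact the generic situation under your setup, not a degenerate corner case. Passing to the transpose relation $v^\top L^\top=\mathbb{0}_n^\top$ only reindexes the same homogeneous equation and does not repair this.

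The paper avoids this by working with the shifted matrix $B=dI-L$ of Eqn.~\eqref{eq:B} and the eigenvalue relation $Bv=\lambda_1 v$ with $\lambda_1=d$ real and positive ($\angle\lambda_1=0^{\circ}$), exactly as in Lemma~\ref{lemma:vw} for the left eigenvector. There the $i^{\text{th}}$ component reads $\sum_j b_{ij}v_j e^{\iota(\theta_{ij}+\phi_{vj})}=d\, v_i e^{\iota\phi_{vi}}$, whose right-hand side is a \emph{nonzero} complex number of phase exactly $\phi_{vi}$ (since $d>0$ and $v_i\geq 0$); equating phases then legitimately yields $\tan(\phi_{vi})$ as the stated ratio, and the hypothesis that this ratio lies in $[-1,1]$ gives $\phi_{vi}\in[-45^{\circ},45^{\circ}]$, hence real dominance by Def.~\ref{def:real dominance}. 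Your closing steps (tangent bound implies angle bound implies real dominance) and your remarks on the branch ambiguity are fine and indeed flag something the paper glosses over, but the argument only goes through once the eigenvalue is shifted away from zero; you should replace $Lv=\mathbb{0}_n$ by $Bv=dv$ (and correspondingly read the angles as those of $B$, consistent with the $\theta_{ij}=\angle B_{ij}$ convention used throughout Section~\ref{sec:laplacian}).
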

\begin{proof}
The condition in Eqn. \eqref{eq:tan condition on v} has been derived by solving the eigenvector problem for a complex-valued Laplacian. Consider $Bv =\lambda_1 v$ where $\angle\lambda_1=0^{\circ}$ for an $n\times n$ matrix. The remaining proof follows on the same lines as in Lemma \ref{lemma:vw}. 
\end{proof}
\begin{theorem} \label{thm:weakly connected}
Consider a weakly connected digraph $\mathcal{G}(A)$ and let $\theta_{ij}=\angle B_{ij}$ be as in Eq.~\eqref{eq:B} for $L \in \mathbb{C}^{n\times n}$. Suppose that $\phi_{wi} = \angle (w)_i\text{ and } \phi_{vi}=\angle v_{i} $ satisfy Eqns. \eqref{eq:tan} and \eqref{eq:tan condition on v} where $\phi_{wi} \text{ and }\phi_{vi}$ are the phase angles of $i^{th}$ component of the left and right eigenvectors corresponding to the zero eigenvalue, respectively. Let $\mathcal{G}(A)$ be weakly connected with $n_{s}$ number of sinks, then $-L$ has zero as a semi-simple eigenvalue (with $n_{s}$ multiplicity). Further, a marginally stable $-L$ is marginally is rEENN. 
\end{theorem}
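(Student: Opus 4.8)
My plan is to prove the two assertions in turn: (a) the spectral–combinatorial claim that $0$ is a semi-simple eigenvalue of $-L$ of multiplicity $n_s$, and (b) the analytic upgrade that marginal stability yields the rEENN property; throughout I would reuse the block-decomposition philosophy already employed for strongly connected digraphs in Theorem~\ref{thm:non wt.balanced}. For (a): since $\mathcal G(A)$ is weakly connected with no globally reachable node, its condensation is a DAG whose $n_s\ge 2$ terminal (sink) components $S_1,\dots,S_{n_s}$ are mutually non-adjacent, together with a transient vertex set $\mathcal T$. Ordering the sink vertices first, a permutation $P$ puts $L$ into the Frobenius-type form
\[
PLP^\top=\begin{bmatrix}\operatorname{diag}(L_{11},\dots,L_{n_s n_s}) & 0\\ C & L_{\mathcal T}\end{bmatrix},
\]
where each $L_{kk}$ is the Laplacian of the strongly connected sub-digraph on $S_k$ (so $L_{kk}\mathbb 1=\mathbb 0_{|S_k|}$), $C$ collects the transient-to-sink edge weights, and $L_{\mathcal T}=\widehat D_{\mathcal T}-A_{\mathcal T}$ with $\widehat D_{\mathcal T}$ the diagonal of full transient out-degrees (including leakage to sinks). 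Each $L_{kk}$ is irreducible (Lemma~\ref{lemma:irreducible}), so by the strong complex Perron–Frobenius property of the shifted block $dI-L_{kk}$ the eigenvalue $0$ of $L_{kk}$ is simple with right null vector $\mathbb 1$. The key step is that $L_{\mathcal T}$ is nonsingular: this is the complex analogue of ``the transient block of a real Laplacian is a nonsingular $M$-matrix,'' and I would prove it by writing $L_{\mathcal T}=\widehat D_{\mathcal T}\bigl(I-\widehat D_{\mathcal T}^{-1}A_{\mathcal T}\bigr)$ and using that every transient node has a directed path to some sink, so the real-dominance/phase hypotheses force a weakly-chained generalized diagonal dominance on $\widehat D_{\mathcal T}^{-1}A_{\mathcal T}$ and hence $0\notin\operatorname{spec}(L_{\mathcal T})$. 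Granting this, the block-triangular form gives $\det(PLP^\top-\lambda I)=\bigl(\prod_k\det(L_{kk}-\lambda I)\bigr)\det(L_{\mathcal T}-\lambda I)$, so $0$ has algebraic multiplicity exactly $n_s$; and solving the homogeneous system $PLP^\top\,[\,u^\top\ \ y^\top\,]^\top=\mathbb 0$ forces $u=(\alpha_1\mathbb 1,\dots,\alpha_{n_s}\mathbb 1)$ and $y=-L_{\mathcal T}^{-1}Cu$, so the geometric multiplicity is also $n_s$. Thus $0$ is semi-simple of multiplicity $n_s$.

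For (b), assume $-L$ is marginally stable, so $\Re\lambda\ge 0$ for every $\lambda\in\operatorname{spec}(L)$ with exactly $n_s$ of them zero. Form $B=dI-L$ with $d$ as in \eqref{eq:B}; then $B$ and $L$ share eigenspaces, and $d$ is the dominant eigenvalue of $B$, repeated $n_s$ times and semi-simple. Let $v^{(1)},\dots,v^{(n_s)}$ and $w^{(1)},\dots,w^{(n_s)}$ be biorthonormal right and left eigenvectors for this eigenvalue. By Lemma~\ref{lemma:vw} together with Proposition~\ref{remark:tan condition on v}, the hypotheses \eqref{eq:tan} and \eqref{eq:tan condition on v} make every $v^{(j)}$ and $w^{(j)}$ real dominant. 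Semi-simplicity of $d$ gives $\lim_{k\to\infty}B^k/d^k=\sum_{j=1}^{n_s}v^{(j)}\bigl(w^{(j)}\bigr)^H$, and since the real part of each rank-one term is $\ge\mathbb 0_{n\times n}$ (real dominance of its factors), so is the sum; hence $\Re(B^k)\ge 0$ for all large $k$, i.e.\ $B$ is rENN. Re-running the argument of Theorem~\ref{thm:weaklemma} with the spectral projector $\sum_j v^{(j)}(w^{(j)})^H$ in place of $vw^H$ then gives that $B$ is rEENN, and therefore, via $e^{-Lt}=e^{-dt}e^{Bt}$ with $e^{-dt}>0$, that $-L$ is rEENN.

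The main obstacle is the structural nonsingularity of $L_{\mathcal T}$ in part (a): in the real setting this is immediate from $M$-matrix theory, but in the complex setting one must argue that the phase conditions forbid the accidental cancellations (of the kind discussed around Lemma~\ref{lemma:irreducible}) that could otherwise push $\operatorname{corank}(L)$ above $n_s$. A lesser point, shared with Theorem~\ref{thm:globally reachable}, is that ``marginally stable'' as defined permits nonzero eigenvalues on the imaginary axis; to keep $d$ strictly dominant in part (b) one either excludes these (as is tacitly done earlier) or enlarges $d$ and adds a separate bounded-oscillation estimate.
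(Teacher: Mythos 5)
Your argument follows the same route as the paper's proof, and in fact supplies considerably more detail. For the semi-simplicity claim the paper only states that the result ``follows by adapting'' the real-valued counterpart in \cite[Thm.~6.6]{bullo2018lectures}; your Frobenius-normal-form decomposition into sink components plus a transient block is exactly what that adaptation amounts to, and you are right to single out the nonsingularity of $L_{\mathcal T}$ (and, relatedly, the simplicity of the zero eigenvalue of each irreducible sink block $L_{kk}$, which needs the real-dominance hypotheses to descend to the sub-blocks) as the places where the complex-valued setting requires a genuine argument rather than $M$-matrix theory. The paper does not supply that argument either, so your sketched ``weakly-chained diagonal dominance'' step is an honest acknowledgment of a gap shared with the source rather than a defect relative to it. For the rEENN claim your computation coincides with the paper's: both reduce to the spectral projector $\sum_{j} v^{(j)}(w^{(j)})^{H}$ in Eqn.~\eqref{eq:steady state for digraphs without globally reachable node} and to the elementwise estimate $\Re\bigl(v_i\bar{w}_j\bigr)=ac+bd\geq 0$ valid when both factors are real dominant, which is obtained from Lemma~\ref{lemma:vw} and Prop.~\ref{remark:tan condition on v}. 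One caveat applies equally to your write-up and to the paper's: inferring $\Re(e^{-Lt})\geq 0$ for all large $t$ from the fact that the \emph{limit} of $\Re(e^{-Lt})$ is entrywise $\geq 0$ is not automatic, since entries converging to $0$ may do so from below; this passage from a non-strict limit to rEENN deserves the same scrutiny as the analogous step in Thm.~\ref{thm:weaklemma}. Your closing remark that marginal stability as defined still permits nonzero imaginary-axis eigenvalues, which would defeat strict dominance of $d$ in $B=dI-L$, is also well taken; the paper tacitly excludes this case.
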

\begin{proof}
Since $\mathcal{G}(A)$ is weakly connected, it can have multiple sinks. Just to recall, a sink is a node with zero outgoing edges. The complex-valued Laplacian matrix may have a semi-simple zero eigenvalue depending on the number of sinks. The proof of this claim follows by adapting the real-valued counterpart given by \cite[Thm. 6.6]{bullo2018lectures}. 
Let $n_s$ be the algebraic multiplicity of the zero eigenvalue. Then at the steady state, we have
\begin{equation} \label{eq:steady state for digraphs without globally reachable node}
 \lim _{t \rightarrow \infty} e^{-Lt}= \sum_{i=1}^{n_s} v_i w_i^{H}.
\end{equation}
From Eqn.~\eqref{eq:steady state for digraphs without globally reachable node}, it suffices to show that the entries of all the eigenvectors (corresponding to zero eigenvalue) $v_i \text{ and }w_i$ have non-negative real parts and are \emph{real dominant} to establish that $-L$ is rEENN. Also, all the right eigenvectors does not lie in the $\operatorname{span}\{\mathbb{1}_n\}$.
Hence, using Prop. \ref{remark:tan condition on v}, we ensure \emph{real dominance} for the right eigenvectors. Then, along with the conditions in Eqns. \eqref{eq:tan} and \eqref{eq:tan condition on v}, the left and right eigenvectors corresponding to the semi-simple zero eigenvalue have non-negative real parts. Thus, $-L$ is rEENN from Eqn. \eqref{eq:steady state for digraphs without globally reachable node}. \end{proof}
\begin{figure}[htbp]
  \centering
  \begin{subfigure}{0.24\textwidth}
    \centering
    \includegraphics[scale=0.478]{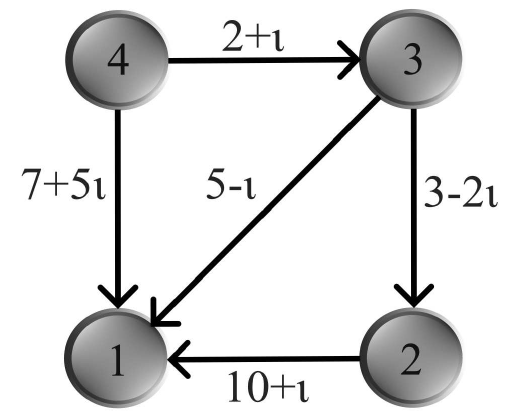}
    \caption{Weakly connected digraph with node `$1$' as a globally reachable node}
    \label{fig:digraph with a globally reachable node}
  \end{subfigure}
  \hfill
  \begin{subfigure}{0.24\textwidth}
    \centering
    \includegraphics[scale=0.478]{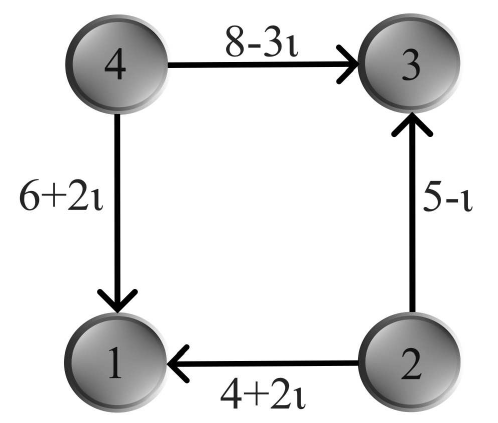}
    \caption{Weakly connected digraph with no globally reachable nodes}
    \label{fig:digraph without globally reachable node}
  \end{subfigure}
  \caption{Weakly connected digraphs}
  \label{fig:weakly connected digraphs}
\end{figure}

\begin{example}\label{ex:golbally reachable node}
Consider the weakly connected digraph as shown in Fig. \ref{fig:digraph with a globally reachable node} where node `$1$' is globally reachable. The Laplacian matrix for this digraph is 
 \begin{equation*}
 L=\begin{bmatrix}
  0& 0& 0 &0\\ 
 -10-\iota &  10+\iota &   0 &  0\\ 
 -5+\iota &  -3+2\iota  &   8-3\iota &  0\\ 
 -7-5\iota &  0 &   -2-\iota &  9+6\iota\\ 
\end{bmatrix}.    
 \end{equation*}
Here $\operatorname{spec}
(L)=\left\{ 9+6\iota, 8-3\iota,10+\iota, 0\right\}$ is of corank $1$ and the negated Laplacian is marginally stable. Thus, Thm.~\ref{thm:globally reachable} is applicable and $-L$ is rEENN. This property can be verified by examining the nonnegative real part of $e^{-L t}$ at $t=4$:
 \begin{equation*}
 e^{-L t}=\begin{bmatrix}
  1000 &  0 & 0 &0\\ 
1000 &  0 &0 &  0\\ 
1000&  0 &  0 &  0\\ 
1000 &  0 & 0 &  0\\ 
\end{bmatrix}.  
 \end{equation*}
 Moreover, since the `$1^{st }$' node is globally reachable, so we have $w=[ 2 ~ 0 ~ 0 ~ 0]^{\top}$ with its `$1^{st }$' entry real positive.  \qed 
\end{example}
\begin{example}\label{ex:weakly connected}
 Consider the weakly connected digraph in Fig.~\ref{fig:digraph without globally reachable node} without any globally reachable node. The Laplacian is:
 \begin{equation*}
 L=\begin{bmatrix}
0 & 0 & 0 & 0\\ 
-4-2\iota & 9+\iota &  -5+\iota & 0\\ 
0&  0 &  0 & 0\\ 
-6-2\iota & 0 & -8+3\iota& 14-\iota\\ 
\end{bmatrix}.    
 \end{equation*}
 Here $\operatorname{spec}(L)=\left\{ 9+\iota, 14-\iota,0, 0\right\}$ has a semi-simple zero eigenvalue. Hence, Thm.~\ref{thm:weakly connected} is applicable and $-L$ is rEENN. This property can verified by examining the real part of $e^{-L t}$ at $t=1$, which has nonnegative entries: 
 \begin{equation*}
 e^{-L t}=\begin{bmatrix}
  1000 &  0 & 0 &0\\ 
 463.4+170.8\iota &  0.1-0.1\iota &   536.6-170.7\iota &  0\\ 
 0 &  0 &   1000 &  0\\ 
 416.2+172.6\iota &  0 &   583.8-172.6\iota &  0\\ 
\end{bmatrix}.    
 \end{equation*}
\end{example} \qed

\section{Consensus in complex-valued Laplacian Flows}\label{sec:consensus}
With the network-theoretic results on weakly and strongly connected digraphs established in the previous section, we proceed to the main result of this study, focusing on consensus in complex-valued digraphs. We write the Jordan matrix $J=[0] \oplus J_\text{reduced}$, where $J_\text{reduced}$ comprises non-zero eigenvalues. Note that $\oplus$ is the direct sum of matrices, and in our case $J$ is merely a $2\times 2$ block matrix with right-lower block given by $J_\text{reduced}$.
\begin{theorem}  
 
 \label{lemma:consensus} Consider the flow system in Eqn.~\eqref{eq:Lflows} with $\bar{L}=L$. Let $L \in \mathbb{C}^{n \times n}$ has corank $1$ with right and left eigenvectors $v$ and $w$ satisfying Lemma \ref{lemma:vw}. Consensus is guaranteed in the digraph $\mathcal{G}(A)$ iff the $J_{reduced}$ has all the eigenvalues in the ORHP of the complex-plane. 
\end{theorem}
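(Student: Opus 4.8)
The plan is to analyze the solution of Eqn.~\eqref{eq:Lflows} directly through the Jordan decomposition $L = VJW^H$ with $J = [0]\oplus J_{\text{reduced}}$, and to show that the long-time limit of $x(t) = e^{-Lt}x(0)$ exists and is a nonzero multiple of $\mathbb{1}_n$ precisely when $J_{\text{reduced}}$ has all eigenvalues in the ORHP. First I would write $e^{-Lt} = \sum_{i} e^{-\lambda_i t} P_i(t)$, where $P_i(t)$ are the (polynomial-in-$t$) spectral projector terms; isolating the zero eigenvalue (which is simple, since $L$ has corank $1$) gives $e^{-Lt} = v w^H + \sum_{\lambda_i \neq 0} e^{-\lambda_i t} P_i(t)$, because the projector onto the zero-eigenspace is $v w^H$ under the normalization $w^H v = 1$. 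The remaining terms are governed by $e^{-t J_{\text{reduced}}}$.

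Next I would argue the two directions. For sufficiency, if every eigenvalue $\lambda_i$ of $J_{\text{reduced}}$ satisfies $\Re(\lambda_i) > 0$, then each $e^{-\lambda_i t} P_i(t) \to 0$ as $t\to\infty$ (the exponential decay dominates any polynomial growth from non-trivial Jordan blocks), so $e^{-Lt} \to v w^H$ and hence $x(t) \to v (w^H x(0)) = (w^H x(0))\,\alpha \mathbb{1}_n$, which is a common value in every coordinate. To complete the consensus claim one must check that this limit is generically nonzero — here I would invoke the real dominance hypothesis on $v$ and $w$ from Lemma~\ref{lemma:vw}: since $v = \alpha\mathbb{1}_n$ with $\Re(\alpha) \ge |\Im(\alpha)|$ (so $\alpha \neq 0$) and $w$ is real dominant (so $w \neq 0$), the functional $w^H x(0)$ is nonzero for all $x(0)$ outside a proper subspace, and the limiting vector has all components equal to the nonzero scalar $(w^H x(0))\alpha$. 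For necessity, suppose $J_{\text{reduced}}$ has an eigenvalue $\lambda$ with $\Re(\lambda) \le 0$: if $\Re(\lambda) < 0$ the corresponding mode $e^{-\lambda t}$ blows up, so $x(t)$ diverges for initial conditions with a component along that eigenvector and consensus fails; if $\Re(\lambda) = 0$ (a purely imaginary or zero eigenvalue, the latter excluded by corank $1$), the mode $e^{-\lambda t}$ is a bounded oscillation that does not converge, and again $x(t)$ fails to converge to a constant vector — so consensus cannot hold. In both cases the obstruction is exhibited by choosing $x(0)$ with nonzero projection onto the offending eigenvector.

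The main obstacle I anticipate is handling the degenerate boundary case $\Re(\lambda) = 0$ carefully, and more importantly making the "nonzero consensus value" part rigorous: the theorem's definition of consensus requires the components of $x(t)$ to be \emph{nonzero} and identical as $t\to\infty$, so one must be precise about which initial conditions are admitted (those not in $\ker(w^H)$) or else phrase consensus as "for generic $x(0)$." I would lean on the real dominance conditions to guarantee $\alpha \neq 0$ and $w \neq 0_n$, so that the rank-one limit $v w^H$ is a genuine nonzero projector whose range is $\operatorname{span}\{\mathbb{1}_n\}$; this is exactly where Lemma~\ref{lemma:vw} (and the earlier rEEP/rEENN machinery tying real dominance to the structure of the zero-eigenspace) does the essential work. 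A secondary point to verify is that no non-trivial Jordan block is attached to a purely imaginary eigenvalue in a way that produces polynomial growth — but since any such eigenvalue already breaks convergence, this does not affect the logical structure of the proof.
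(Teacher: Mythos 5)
Your proposal is correct and follows essentially the same route as the paper: both isolate the simple zero eigenvalue via the Jordan decomposition $L=V([0]\oplus J_{\text{reduced}})W^H$, show $e^{-Lt}\to vw^H=\alpha\mathbb{1}_n w^H$ when $\operatorname{spec}(J_{\text{reduced}})$ lies in the ORHP, and conclude $x(t)\to\mathbb{1}_n(\alpha w^H x_0)$. The only cosmetic difference is that the paper disposes of the necessity direction by citing a marginal-stability result from the literature, whereas you argue it directly by exhibiting the diverging or non-convergent modes; your explicit treatment of the $\Re(\lambda)=0$ boundary case and of initial conditions in $\ker(w^H)$ is, if anything, slightly more careful than the paper's.
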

\begin{proof}
From Thm.~\ref{thm:globally reachable}, $L \in \mathbb{C}^{n \times n}$ has corank $1$ implies that the digraph $\mathcal{G}(A)$ has at least one globally reachable node. 
From Thm.~\ref{thm:non wt.balanced}, observe that $-L$ is marginally stable of corank $1$ for strongly connected digraphs, provided that $-L$ is rEEP. Similarly, from Thm.~\ref{thm:globally reachable}, $-L$ is rEENN.

Consider the solution of the flow system $\dot{x} =-Lx$, given by $x(t) = e^{-Lt} x_0$, where $x_0 \in \mathbb{C}^n$ is the vector of initial conditions. Then, by expanding $L=V([0] \oplus J_\text{reduced})W^H$, we compute the matrix exponential $e^{-Lt}$ as
 \begin{equation*}
 e^{-L t} = \left[{v} \mid V_{n, n-1}\right]\left[\begin{array}{c|c}
1 & 0 \\
\hline 0 & e^{-J_\text{reduced} t}
\end{array}\right]\left[\begin{array}{c}
{w}{^{H}} \\
\hline W_{n-1, n}^H
\end{array}\right], 
\end{equation*}
where $v$ and $w$ are the right and left eigenvectors corresponding to $\lambda_1(-L)=0$ (or $e^{\lambda_1(-L)}=1$). 

\textbf{(if)} Let $\text{spec}(J_\text{reduced})$ lie in the ORHP of the complex plane and $e^{-J_{reduced}t} \to 0$ as $t\to \infty$. Thus 
 $\lim_{t \to \infty} e^{-L t}= vw=\alpha\mathbb{1}_{n} {w}^{H}$, where $\alpha $ follows Lemma \ref{lemma:vw}. 
On the other hand, we already know that the negated Laplacian matrix is rEEP and rEENN in strongly connected and weakly connected digraphs, respectively (follows from Thms. \ref{thm:non wt.balanced} and \ref{thm:globally reachable}).
Thus,
\begin{equation} \label{Eqn:steady state of L flow}
\lim_{t \to \infty}{x}(t)= \lim_{t \to \infty} e^{-L t}x_0=\mathbb{1}_{n} (\alpha{w}^{H}{x_{0})}.
\end{equation}
The value of $\alpha {w}^{H}{x_{0}}$ is constant, and hence, $x(t) \in \operatorname{span}\left\{\mathbb{1}_{n}\right\}$ when $t \to \infty$. Hence, the consensus is achieved. 
 
(\textbf{only if}) Suppose the Laplacian flow given by Eqn.~\eqref{eq:Lflows} converges to consensus. This convergence to a non-zero value occurs only if the corresponding negated Laplacian matrix is marginally stable and of corank $1$ (see \cite[Thm. 7.1]{bullo2018lectures}). Consequently, $J_{\text{reduced}}$ must have its eigenvalues in the ORHP.
\end{proof}
\begin{remark}\label{remark:no consensus in weakly connected digraphs}
 In contrast to the strongly connected and weakly connected digraphs with at least a globally reachable node, the complex-valued Laplacian flows corresponding to weakly connected digraphs without a globally reachable node does not necessarily achieve consensus.
This is because of the multiple number of sinks in the digraph, which is equal to the number of zero eigenvalues of the Laplacian matrix. \qed 
\end{remark}
\subsection{Modified Flows}\label{sec: algoirthm}
Thm.~\ref{lemma:consensus} provides us a necessary and sufficient condition for the flow system in Eqn.~\ref{eq:Lflows} to attain consensus when $\bar{L} = L$. Specifically, the condition requires the spectrum of the reduced Jordan block $J_{\text{reduced}}$ to lie in the ORHP of the complex plane. We now consider the case where the spectrum (or a subset of it) lies in the OLHP of s-plane. 

The state trajectories in Eqn.~\ref{eq:Lflows} diverge when some or all eigenvalues lie in the OLHP, making convergence impossible. To address this, we propose modifying the network structure so that the spectrum of the new $J_{\text{reduced}}$ (corresponding to the modified matrix) lies in the ORHP, thereby enabling the application of Thm.~\ref{lemma:consensus}. 

\begin{theorem}\label{thm: eigenspectrum_assignment}{Let the digraph $\mathcal{G}(A)$ has at least one globally reachable node. Let $L=V([0] \oplus J_\text{reduced})W^H$ with the non-zero eigenvalues in both OLHP and ORHP of the complex plane. Then, the modified matrix $L_m=V([0] \oplus J'_\text{reduced})W^H\in \mathbb{C}^{n\times n}$ associated with a digraph $\mathcal{G}(A')$ has all the non-zero eigenvalues in the ORHP of the complex plane if there exists an $S \in \mathbb{C}^{n \times n} \text{ such that }J'_\text{reduced}=SJ_{reduced}$.} 
\end{theorem}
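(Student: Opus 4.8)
The plan is to check that the prescribed form $L_m = V([0]\oplus J'_\text{reduced})W^H$ automatically retains the data Thm.~\ref{lemma:consensus} needs --- a simple zero eigenvalue with right eigenvector $\mathbb{1}_n$ and the same left eigenvector $w$ --- while the freedom in $S$ is exactly what relocates the remaining spectrum into the ORHP. First I would collect the facts that come for free from the common factorisation: since $L$ is a complex-valued Laplacian, $L\mathbb{1}_n=\mathbb{0}_n$, and corank $1$ makes this the entire kernel, so the first column of $V$ is proportional to $\mathbb{1}_n$ and the block $J_\text{reduced}$ carries precisely the non-zero eigenvalues of $L$; in particular $J_\text{reduced}$ is invertible. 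Because $L$ and $L_m$ share $V$, $W$ and the top-left $[0]$ block, they have the identical eigenstructure at $0$: $L_m\mathbb{1}_n=\mathbb{0}_n$ and $w^H L_m=\mathbb{0}_n^\top$. Hence the hypotheses already imposed on $v=\alpha\mathbb{1}_n$ and $w$ through Lemma~\ref{lemma:vw} transfer unchanged to $L_m$.

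Next I would produce $S$. Since $\det J_\text{reduced}$ is the product of the non-zero eigenvalues of $L$ and is therefore non-zero, the stabilization result of Lemma~\ref{lemma:existence of D} (the pole-placement analogue announced in Section~\ref{sec:introduction}) supplies an invertible multiplier $S$ for which $\operatorname{spec}(SJ_\text{reduced})$ is contained in the ORHP; set $J'_\text{reduced}=SJ_\text{reduced}$. Invertibility of $S$ forces $J'_\text{reduced}$ to have no zero eigenvalue, so $\operatorname{spec}(L_m)=\{0\}\cup\operatorname{spec}(J'_\text{reduced})$ keeps $0$ simple and places every remaining eigenvalue in the ORHP.

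It remains to realise $L_m$ as the Laplacian of a digraph. From $L_m\mathbb{1}_n=\mathbb{0}_n$ the matrix $L_m$ has zero row sums, so putting $(A')_{ij}=-(L_m)_{ij}$ for $i\neq j$ and $(A')_{ii}=0$ yields $D'_{out}=\operatorname{diag}(A'\mathbb{1}_n)$ with $(D'_{out})_{ii}=(L_m)_{ii}$, whence $L_m=D'_{out}-A'$ is exactly the Laplacian of $\mathcal{G}(A')$. Combining this with the preceding steps, $L_m$ satisfies all the assumptions of Thm.~\ref{lemma:consensus}, so the modified flow $\dot x=-L_m x$ attains consensus.

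The only non-routine step is the existence of $S$ with $\operatorname{spec}(SJ_\text{reduced})\subset\text{ORHP}$, which is where invertibility of $J_\text{reduced}$ is indispensable and where Lemma~\ref{lemma:existence of D} does the work; for a general multiplier one may simply take $S=\Lambda'J_\text{reduced}^{-1}$ with $\Lambda'$ any matrix whose spectrum lies in the ORHP, whereas a diagonal $S$ additionally needs the leading principal minors of $J_\text{reduced}$ to be non-zero (arrangeable by a permutation similarity). A secondary caveat worth recording is that the induced edge weights $(A')_{ij}$ need not obey the convention $\beta_{ij}\in(-90^\circ,90^\circ)$; enforcing $\Re((A')_{ij})>0$ is then an extra restriction on the admissible $S$ (or an added diagonal rescaling), not an obstruction to the argument.
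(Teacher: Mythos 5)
Your proposal is correct and follows essentially the same route as the paper's own proof: invoke Lemma~\ref{lemma:existence of D} on the invertible block $J_\text{reduced}$ to obtain $S$ with $\operatorname{spec}(SJ_\text{reduced})$ in the ORHP, and observe that $L_m$ inherits the simple zero eigenvalue with right eigenvector $\alpha\mathbb{1}_n$ and left eigenvector $w$ from the shared factors $V$ and $W$. You are in fact somewhat more careful than the paper on two points it glosses over --- verifying that the leading principal minors of $J_\text{reduced}$ are non-zero (automatic here, since $J_\text{reduced}$ is triangular with non-zero diagonal) and noting that zero row sums let $L_m$ be realised as the Laplacian of some $\mathcal{G}(A')$ only up to the phase-angle convention on edge weights, a caveat the paper defers to the discussion following the theorem.
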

\begin{proof}
    Since the Jordan block $J_\text{reduced}$ satisfies the hypothesis of Lemma~\ref{lemma:existence of D}, there exists a complex-valued diagonal matrix ${S} \in \mathbb{C}^{n \times n }$ such that the spectrum of $J'_\text{reduced}={S}J_\text{reduced}$ lies in the ORHP. Second, the normalized right and left eigenvectors of $L$ and $L_m$ coincide, as does the simple zero eigenvalue. This implies that the modified digraph also contains a globally reachable node (by an argument similar to that in Thm.~\ref{thm:globally reachable}). Combining these observations, the conclusion of the theorem follows.
\end{proof}

\vspace{-3.0mm}

\begin{lemma}({\cite{ballantine1970stabilization}})\label{lemma:existence of D}
    Consider a matrix $M \in \mathbb{C}^{q \times q}$ whose leading principal minors are non-zero. Then, there exists a diagonal matrix ${S} \in \mathbb{C}^{q \times q}$ such that all the eigenvalues of ${S}M$ are positive and simple. 
\end{lemma}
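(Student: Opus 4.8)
The plan is to control the characteristic polynomial of $SM$ directly through its coefficients, exploiting the fact that a diagonal left–multiplier interacts cleanly with principal minors. Writing $S=\operatorname{diag}(s_1,\dots,s_q)$, the coefficient of $\lambda^{q-k}$ in $\det(\lambda I-SM)$ is $(-1)^k\sigma_k(S)$, where $\sigma_k(S)$ is the sum of the $k\times k$ principal minors of $SM$. Since $(SM)_{I,I}=\operatorname{diag}(s_i:i\in I)\,M_{I,I}$ for every index set $I$, each such minor factors as $\big(\prod_{i\in I}s_i\big)\det(M_{I,I})$, so that
$$\sigma_k(S)=\sum_{|I|=k}\Big(\prod_{i\in I}s_i\Big)\det(M_{I,I}),\qquad k=1,\dots,q.$$
The task is thereby reduced to choosing complex $s_i$ so that every $\sigma_k(S)$ is real and positive and, moreover, the resulting real polynomial $\lambda^q-\sigma_1\lambda^{q-1}+\cdots+(-1)^q\sigma_q$ has $q$ distinct positive roots. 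The only structural input is that the leading principal minors $\Delta_k:=\det\!\big(M_{\{1,\dots,k\},\{1,\dots,k\}}\big)$ are nonzero; these are exactly the minors attached to the initial-segment index sets.

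The first step is to force a single term to dominate each $\sigma_k$. I would set $s_i=t^{a_i}e^{\iota\phi_i}$ with exponents $a_1>a_2>\cdots>a_q>0$ and $t$ large. Among all size-$k$ subsets, $\sum_{i\in I}a_i$ is maximised uniquely by the initial segment $I=\{1,\dots,k\}$, whose minor $\Delta_k$ is nonzero by hypothesis; hence $\sigma_k(S)=t^{A_k}e^{\iota\Phi_k}\Delta_k\big(1+O(t^{-\rho})\big)$ with $A_k=a_1+\cdots+a_k$, $\Phi_k=\phi_1+\cdots+\phi_k$ and gap $\rho=\min_{1\le k<q}(a_k-a_{k+1})>0$ (the top coefficient $\sigma_q=(\prod_i s_i)\Delta_q$ is already a single term). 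Choosing phases by the telescoping rule $\phi_k=\arg\Delta_{k-1}-\arg\Delta_k$ (with $\Delta_0=1$) aligns each dominant term onto the positive real axis, so $\sigma_k\approx t^{A_k}|\Delta_k|>0$. Because the scales $A_k$ are well separated with decreasing gaps, a Newton-polygon (tropical-root) estimate localises the $q$ roots near the distinct positive reals $r_k:=t^{a_k}|\Delta_k|/|\Delta_{k-1}|$.

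This construction yields eigenvalues that are only approximately real, and upgrading "approximately real" to "exactly real and simple" is the crux; I would do it in two moves. First, holding the magnitudes $t^{a_i}$ fixed and treating the partial-sum phases $\Phi_k$ as unknowns, I solve the $q$ real equations $\Im\!\big(\sigma_k(S)\big)=0$ exactly: after dividing the $k$-th equation by $t^{A_k}$, the Jacobian in the $\Phi$-coordinates tends to $\operatorname{diag}(|\Delta_1|,\dots,|\Delta_q|)$ as $t\to\infty$ (the diagonal entries come from the dominant terms, the off-diagonal entries are $O(t^{-\rho})$), hence it is invertible and the implicit function theorem produces phases within $O(t^{-\rho})$ of the nominal choice at which all $\sigma_k$ are exactly real; the correction being small, they remain positive. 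Second, with a genuinely real characteristic polynomial whose roots lie in disjoint disks $D_k$ centred on the real points $r_k$, conjugate symmetry forces each root to be real: a nonreal root in $D_k$ would drag its conjugate into the same real-symmetric disk, contradicting that each disk contains exactly one root. Distinct disks then give $q$ distinct, hence simple, positive eigenvalues of $SM$.

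The main obstacle is precisely this passage from an asymptotic, approximately-real spectrum to an exactly real, positive, and simple one. The nonvanishing of the leading principal minors is what makes it succeed, since it guarantees that the dominant term of each $\sigma_k$ is nonzero, which is simultaneously what (i) lets a single term dominate and fixes its sign through the phase choice, (ii) renders the Jacobian of the imaginary-part system diagonally dominant and invertible, and (iii) keeps the localisation disks disjoint and centred on the real axis. If some leading minor vanished, the corresponding dominant term would disappear and none of these three ingredients would survive, consistent with this hypothesis being exactly the obstruction identified by Ballantine.
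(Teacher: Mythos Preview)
The paper does not supply a proof of this lemma; it is imported verbatim from Ballantine (1970) as an external black box and invoked once, in the proof of Theorem~\ref{thm: eigenspectrum_assignment}. There is therefore no in-paper argument to compare your proposal against.

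Assessed on its own merits, your strategy is sound and essentially complete. The identity $\det\!\big((SM)_{I,I}\big)=\big(\prod_{i\in I}s_i\big)\det(M_{I,I})$ is correct; the choice $s_i=t^{a_i}e^{\iota\phi_i}$ with strictly decreasing exponents does make the leading principal minor $\Delta_k$ the unique dominant contributor to $\sigma_k$; the telescoping phase choice $\phi_k=\arg\Delta_{k-1}-\arg\Delta_k$ aligns each dominant term with the positive real axis; and the implicit-function correction of the $\Phi_k$ is legitimate because the rescaled Jacobian of $(\Phi_1,\dots,\Phi_q)\mapsto\big(t^{-A_1}\Im\sigma_1,\dots,t^{-A_q}\Im\sigma_q\big)$ is $\operatorname{diag}(|\Delta_1|,\dots,|\Delta_q|)+O(t^{-\rho})$, hence invertible for large $t$. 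The one step that deserves an extra sentence is the root localisation: once the $\sigma_k$ are exactly real and positive with $\sigma_k^2/(\sigma_{k-1}\sigma_{k+1})\to\infty$, a Rouch\'e comparison of $p(\lambda)$ with the monomial $(-1)^k\sigma_k\lambda^{q-k}$ on circles $|\lambda|=\sqrt{r_kr_{k+1}}$ shows that exactly $q-k$ roots lie inside the $k$-th circle, hence exactly one root lies in each annulus between consecutive circles. Each such annulus is symmetric about the real axis, so your conjugate-pair argument forces that single root to be real; positivity and simplicity then follow from the disjointness of the annuli on the positive axis. With that detail filled in, your argument is a valid self-contained proof, independent of the cited reference.
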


Thm.~\ref{thm: eigenspectrum_assignment} is reminiscent of a pole placement algorithm using a prescribed set of eigenvectors—in our case, the eigenvectors of the original (unmodified) Laplacian $L$. This implies that the modified matrix (denoted as $L_m$) always has a simple zero eigenvalue with the right eigenvector $v=\alpha \mathbb{1}_n$ by virtue of Lemma \ref{lemma:existence of D}. {However, it may or may not be a Laplacian matrix, as a Laplacian matrix must have the diagonal entries being equal to the sum of off-diagonal entries. If it is a Laplacian matrix, then the edge connections in the modified digraph $\mathcal{G}(A')$ (where $A'$ is the modified adjacency matrix) may differ from those in the original digraph (see Ex.~\ref{ex:edgeaddition}). However, by construction, the modified digraph inherits the globally reachable properties of the nodes in the original digraph.} Designing an eigenspectrum assignment which modifies only specific edges remains an interesting research direction, which we leave for future work.

We present Alg.~\ref{alg:buildtree} based on Thms.~\ref{lemma:consensus} and ~\ref{thm: eigenspectrum_assignment} to systematically check all the necessary and sufficient conditions in order to guarantee consensus in complex-valued digraphs. 


\begin{algorithm}[htbp]
\caption{ Consensus in Complex-valued Laplacian Flows}
\label{alg:buildtree}
\begin{algorithmic}[1]
    \STATE \textbf{Input} $\mathcal{G}(A)$ with at least one globally reachable node, $v=\alpha \mathbb{1}_n\text{ and }w\in \mathbb{C}^n$ are the right and left eigenvectors associated with the zero eigenvalue of $\bar{L}=L \in \mathbb{C}_{n\times n}$.  
    \STATE \textbf{Output} Agents converge to consensus in the original or modified digraph.
    \WHILE{number of nodes $n \geq 2$}
            \IF{{$v \text{ and } w$ are \emph{real dominant} and }
            $J_{\text{reduced}}$ (of $-L$) has eigenvalues in ORHP;}
                \STATE States in Eqn.~\eqref{eq:Lflows} attain consensus.
            \ELSE
                \STATE Find a diagonal matrix ${S}$ with $\operatorname{spec}({S} J_{\text{reduced}})$ lie in the ORHP. Define the modified matrix: $L_m=V[[0] \oplus [{S} J_{\text{reduced}}]]W^H$ and set $\bar{L}=L_m$. Flow system in Eqn.~\eqref{eq:Lflows} attain consensus.
            \ENDIF
    \ENDWHILE
\end{algorithmic}
\end{algorithm}
\begin{example}\label{ex:edgeaddition}
Consider the Laplacian matrix: 
\begin{equation*}
L=\begin{bmatrix}
250+960\iota & 0  & -250-960\iota \\
-173-984\iota & 173+984\iota & 0\\ 
0 & -87.2-996\iota & 87.2+996\iota \\
\end{bmatrix}.
\end{equation*}
associated with the digraph shown in Fig. \ref{fig:unstable nw}. Here, $\operatorname{spec}(L)=\{1107.7+132\iota, 0, -597.5+1618.3\iota \}$ which has an eigenvalue in the OLHP of the complex plane. Thus, the state trajectories of the flow system $\dot{x}(t)=-Lx(t)$ diverge. 

The Jordan matrix of $L$ is $J=[0] \oplus J_{\text{reduced}} \text{ where } J_{\text{reduced}}=\operatorname{diag}(1107.7+132\iota, -597.5+1618.3\iota)$. Define the diagonal matrix ${S}\!=\!\operatorname{diag}( 0.524-0.625\iota ,-0.245-0.663\iota)$. Substitute ${S}J_{\text{reduced}}=\operatorname{diag}(1408,1219)$ 
in the modified matrix $L_{m}=V([0] \oplus {S}J_{reduced})W^H$ to get
\begin{equation*}
L_{m}=\begin{bmatrix}
878.7-36.8\iota & -438.2+53.31\iota & -440.5-16.5\iota \\
-444.4-90.7\iota & 876+0.28\iota & -431+90\iota\\ 
-444.2+18\iota & -428-55\iota & 873+37\iota \\
\end{bmatrix}.
\end{equation*} 
\begin{figure}[htbp]
    \begin{subfigure}{0.23\textwidth}
    \centering
    \includegraphics[scale=0.478]{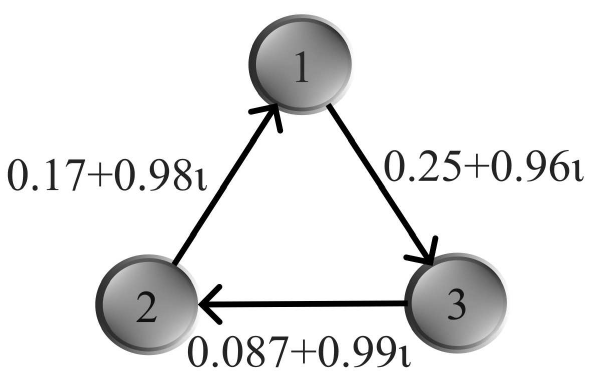}
    \caption{\small original network with a divergent Laplacian flow as it violates Thm. \ref{lemma:consensus}}
    \label{fig:unstable nw}
  \end{subfigure}
    \hfill
  \begin{subfigure}{0.23\textwidth}
    \centering
    \includegraphics[scale=0.453]{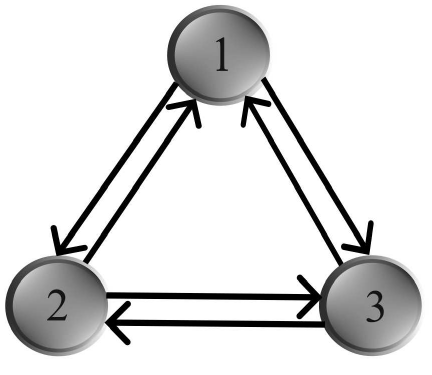}
    \caption{\small network with modified edge-weights yields convergent Laplacian flow (see Thm. \ref{thm: eigenspectrum_assignment}).}
    \label{fig:modified graph}
  \end{subfigure}
  \caption{Strongly connected digraphs with divergent and convergent Laplacian flows shown in Figs. \ref{fig:unstablenw1sim.} and \ref{fig:stablenwsim}, respectively.}
  \label{fig:edge modification}
\end{figure}
Here $\operatorname{spec}(L_{m})=\left\{ 0,1219,1408  \right\}$, which implies that $-L_{m}$ is marginally stable of corank $1$. Also, it can be checked that $w=[0.58 ~0.57+0.046\iota ~0.56+0.096\iota]^{\top}$ satisfies \emph{real dominance}. Consequently,  
the consensus in modified matrix flow system is guaranteed by Thm. \ref{lemma:consensus} shown in Fig. \ref{fig:stablenwsim}.   \qed
\end{example}

\vspace{-5.0mm}
\subsection{Diffusion Dynamics}\label{sec:Applications}
 {\color{black}Diffusion in social networks enables to us understand how information spreads and to find the most influential agent in the multi-agent system (see \cite{shrinate2024centrality,veerman2019diffusion} for details).
In real-valued networks, \cite{veerman2019diffusion} showed that diffusion and consensus are dual dynamical processes. Indeed, diffusion is merely the consensus for the adjoint system $\dot{p}^\top=-p^\top \mathcal{L}$, where $\mathcal{L}\triangleq I-D^{-1}A \in \mathbb{R}^{n\times n}$ is the symmetric random walk Laplacian matrix, $D$ is the degree matrix, and $p$ is the probability vector with strictly positive entries that sum to unity \cite{masuda2017random}. 

Extending diffusion dynamics to complex-valued networks is straightforward by defining $\mathcal{L}=I-D^{-1}A$, where $D$ and $A$ are complex. For weight-balanced and Hermitian digraphs, one can show that \( \mathcal{L} \) is real-valued, ensuring the existence of a probability vector as discussed above; see Ex.~\ref{ex:diffusion-example} for an illustration. However, for general complex-valued networks, \( \mathcal{L} \) may be complex, and the probabilistic interpretation of the diffusion dynamics no longer holds in the conventional sense. A possible alternative interpretation is proposed in~\cite{tian2024structuralbalance}.

Diffusion dynamics are used to identify the most influential agent. To this aim, we introduce the notion of the \emph{influence vector}} for weight-balanced and Hermitian digraphs \cite{masuda2010dynamics}: 
\begin{equation}\label{eq:influence}
\mathcal{I}=\frac{\mathbb{1}_{n}^{T}}{n} v w^{H}\in \mathbb{R}^n, 
\end{equation}
{\color{black} where $v$ and $w$ are the normalized right and left eigenvectors corresponding to zero eigenvalue of the Laplacian matrix.}. There is a direct relationship between influence and steady-state diffusion probabilities: nodes with greater influence in $\mathcal{I}$ achieve higher steady-state probability values. 
\begin{example}\label{ex:diffusion-example}
Consider a weight-balanced digraph with
 \begin{equation*}   
L=\begin{bmatrix}
1+0.5\iota  & 0 &  -1-0.5\iota \\
-1-0.5\iota & 1+0.5\iota & 0\\ 
 0& -1-0.5\iota & 
1+0.5\iota\\
\end{bmatrix}.
\end{equation*}  
The random walk Laplacian matrix $\mathcal{L}$ is 
\begin{equation*}   
\mathcal{L}=\begin{bmatrix}
1 & 0 &  -1 \\
-1 & 1 & 0\\ 
 0& -1 & 1\\
\end{bmatrix}.
\end{equation*}  
whose eigenvectors corresponding to the zero eigenvalue are 
$v=[0.5773~ 0.5773~ 0.5773]^{\top}, w=[0.5773~ 0.5773~ 0.5773]^{\top}$. The influence vector given in Eqn. \eqref{eq:influence} evaluates to 
\begin{equation*}   
\mathcal{I}=\frac{\mathbb{1}_{n}^{T}}{n} vw^{H}=\begin{bmatrix}
0.333 & 0.333 &  0.333
\end{bmatrix}.
\end{equation*}  
Since the digraph is weight-balanced, it is not at all surprising that all the components of $\mathcal{I}$ are identical, resulting in equally influential nodes. \qed 
\end{example}



\section{Simulation Results}\label{sec:results}
We verify our mathematical findings by simulating Laplacian flows for a few representative examples discussed earlier to fairly critique our goal of achieving consensus in Thm. \ref{lemma:consensus}. To ensure that states of the Laplacian flow system (in Eqn. \eqref{eq:Lflows}) do not converge to zero, we choose initial conditions that are not orthogonal to the left eigenvector of the Laplacian matrix corresponding to the zero eigenvalue (in Eqn. \eqref{Eqn:steady state of L flow}). Since the state is complex-valued in our framework, we plot the real and imaginary state trajectories separately in all the plots.

\vspace{-3.0mm}

\subsection{Synthetic Results} 
{We present simulations that illustrate our results as well as the cases which show that the stated conditions are not necessary to achieve consensus in a complex-weighted network. We first validate our theory on a strongly connected digraph whose Laplacian has a left eigenvector satisfying the \emph{ real dominance} condition in Eqn.~\eqref{eq:tan}. Additionally, the Laplacian matrix also follows the conditions stated in Thm. \ref{lemma:consensus} which is discussed in Ex. \ref{ex: Thm1_main_resultholds}. Hence, the corresponding Laplacian flows converge to consensus\footnote{In the interest of brevity, we omit the corresponding simulation.}. 
{ Now, we consider Ex. \ref{ex:edgemodification} where the left eigenvector of Laplacian matrix does not satisfy \emph{real dominance}. The corresponding simulation shown in Fig. \ref{fig:sufficientcondnsim} reveals that the Laplacian flows still reach consensus even though the left eigenvector violates the \emph{real dominance} condition (i.e., it does not satisfy Lemma \ref{lemma:vw}). 
Further, we provide the necessary conditions for achieving consensus in Thm. \ref{lemma:consensus}. Ex. \ref{ex:edgeaddition} demonstrates a network that does not meet these necessary and sufficient conditions.} As expected, Fig. \ref{fig:unstablenw1sim.} shows that the resulting Laplacian flow is unstable. However, the modified flow system achieves consensus, as shown in Fig. \ref{fig:stablenwsim}. This behavior is predicted in Thm. \ref{thm: eigenspectrum_assignment} due to the proposed modified flows.}

We turn our attention to weakly connected digraphs. Fig. \ref{fig:grgraphsim} shows that the digraph with a globally reachable node in Ex. \ref{ex:golbally reachable node} attains consensus, which confirms the theoretical prediction in Thm. \ref{lemma:consensus}. Instead, states associated with the weakly connected digraph with two sinks in Ex. \ref{ex:weakly connected} do not converge, which is shown in Fig. \ref{fig:wcgraphsim} (see our discussion in Rmk. \ref{remark:no consensus in weakly connected digraphs}).
\begin{figure*}[htbp]
  \centering
  \begin{subfigure}{0.31\textwidth}
    \centering
\includegraphics[scale=0.41]{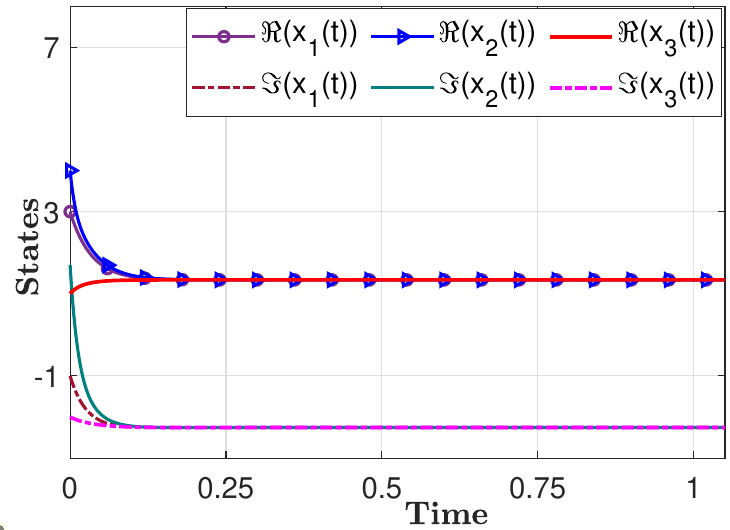}
\caption{\small States reach consensus in Ex.~\ref{ex:edgemodification}, despite the sufficient condition in Eq.~\eqref{eq:tan} being violated.}
    \label{fig:sufficientcondnsim}
  \end{subfigure}
  \begin{subfigure}{0.31\textwidth}
    \centering
    \includegraphics[scale=0.41]{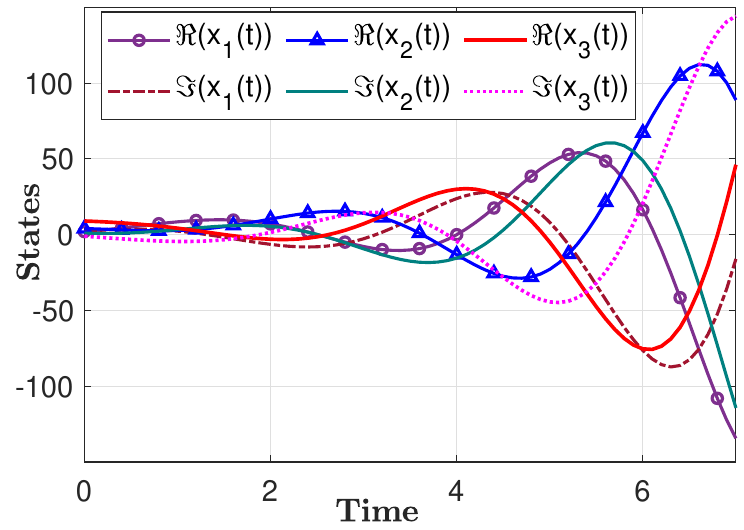}
    \caption{\small States diverge for the strongly connected network in Ex.~\ref{ex:edgeaddition} due to the violation of Thm. \ref{lemma:consensus}.}
    \label{fig:unstablenw1sim.}
  \end{subfigure}
  \begin{subfigure}{0.31\textwidth}
    \centering
    \includegraphics[scale=0.41]{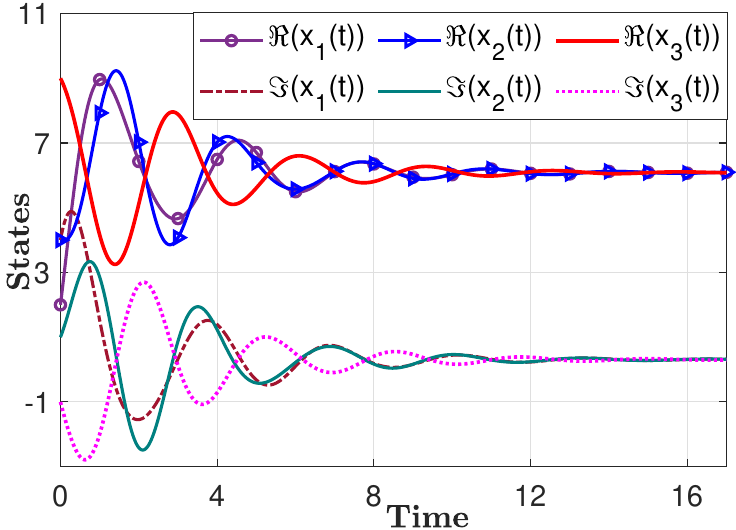}
    \caption{\small States reach consensus in the modified flow (unstable) system in Ex.~\ref{ex:edgeaddition} (cf.~Thm. 
    \ref{thm: eigenspectrum_assignment}.)}
    \label{fig:stablenwsim}
  \end{subfigure}
  \centering
  \begin{subfigure}{0.31\textwidth}
    \centering
    \includegraphics[scale=0.41]{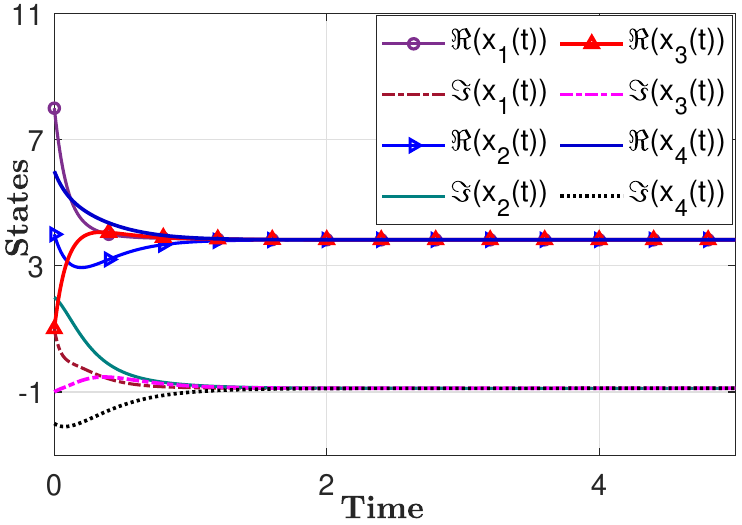}
    \caption{\small States reach consensus in a weakly connected network with a globally reachable node in Ex.~\ref{ex:golbally reachable node}.}
    \label{fig:grgraphsim}
  \end{subfigure}
  \begin{subfigure}{0.31\textwidth}
    \centering
   \includegraphics[scale=0.41]{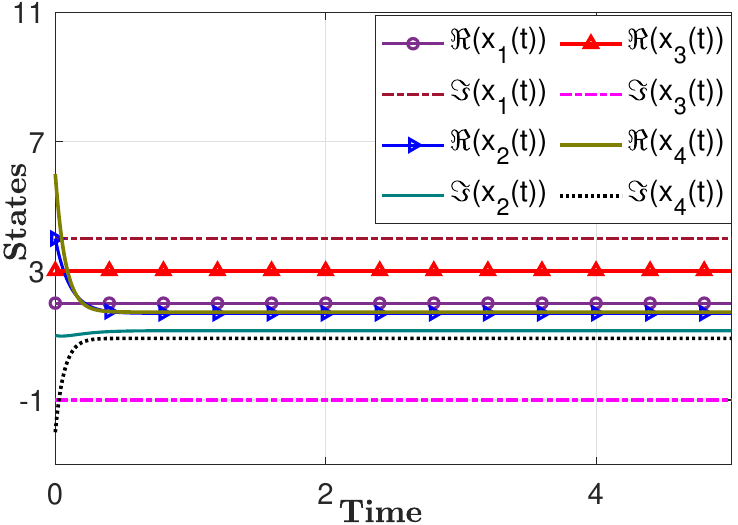}
    \caption{\small States fail to achieve consensus in a weakly connected network in Ex.~\ref{ex:weakly connected} as stated in Rmk. \ref{remark:no consensus in weakly connected digraphs}.}
    \label{fig:wcgraphsim}
  \end{subfigure}
  \begin{subfigure}{0.31\textwidth}
    \centering
    \includegraphics[scale=0.41]{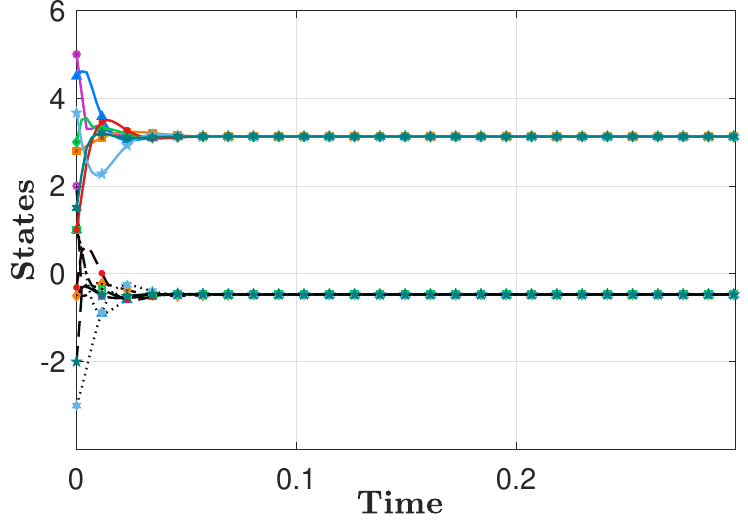}
      \caption{\small Laplacian flows corresponding to a communication network of EIES Dataset with $7$ agents. }
    \label{fig:fixednwsim}
  \end{subfigure}
  \caption{\small Steady-state behavior of complex-valued Laplacian flows in strongly and weakly connected digraphs. The solid and dotted lines represent the real and imaginary part of the states, respectively. }
  \label{fig:examplessim2}
\end{figure*} 
\subsection{Electronic Information Exchange System (EIES) }
We consider a communication network which is represented as a complex-valued digraph in \cite{hoser2005eigenspectral}. In this network, nodes represent network members, and links are formed whenever authors are linked by a message. Specifically, the real part of complex-valued edge weight signifies the number of messages from nodes $i$ to $j$, whereas the imaginary part represents the number of messages from nodes $j$ to $i$. The values of real and imaginary part of the edge weight are the number of messages between two members. 
We constructed the Laplacian matrix of the communication network. This Laplacian has a simple zero eigenvalue with remaining eigenvalues in the ORHP of the complex plane. Additionally, the eigenvectors corresponding to zero eigenvalue satisfy \emph{real dominance}. Consequently using Thm. \ref{lemma:consensus}, the real and imaginary part of all the states converge to consensus as shown in Fig. \ref{fig:fixednwsim}.
\section{Conclusion}\label{sec:conclusions}
This paper develops network and spectral theoretic methods for analyzing flow systems in complex-weighted digraphs, with the aim of developing sufficient and necessary conditions for achieving consensus. Towards this goal, we introduce the concepts of rEEP and rEENN for complex matrices and show that the negated complex-valued Laplacian is rEEP for strongly connected digraphs and rEENN for weakly connected digraphs.

In contrast to real-valued non-negative digraphs, where connectivity guarantees consensus, our analysis reveals that complex-weighted networks require additional algebraic constraints on the phase angles of the edge weights. Further, we propose an algorithm to achieve consensus by modifying the Laplacian matrix when the original network does not meet our stated conditions. Finally, we also explore a dual aspect of consensus, namely, diffusion in networks. We present illustrative examples throughout the paper, along with simulation studies on synthetic systems and a benchmark electronic information exchange system, to validate our theoretical results.

\section{Appendix}\label{sec:appendix}
\textit{Proof of Thm.~\ref{thm:non wt.balanced}:} $i) \implies ii)$ {We recall that $-L$ is said to be rEEP if and only if $\Re(\exp (-Lt))>0$ for all $t \geq t_{0}$. Using Thm. \ref{thm:stronglemma}, there exists a positive number $d$ such that the translated matrix $B$, defined in Eqn. \eqref{eq:B}, belongs to the set $\mathcal{PF}$ with the dominant eigenvector $\Re{(v)}>0$. From Lemma \ref{lemma:PF and cone}, the matrix $B$ is a cone contraction. So the dominant eigenvector $v$ must lie in the positive orthant (all positive entries) and is also unique (follows from \cite[Thm.~3.6]{rugh2010cones}).  

Since $B\in \mathcal{PF}$, the spectral abscissa $\lambda_s$ of $-L$ is simple as well which follows from the part (iii) implies (ii) of the proof of Thm.\ref{thm:stronglemma}. Further, $v$ is the eigenvector associated with $\lambda_s$ of $-L$. Additionally, $L\mathbb{1}_n=0$ which implies that $\mathbb{1}_n$ is the eigenvector associated with the eigenvalue zero. But $v$ must lie uniquely in the positive orthant implying that $v= \alpha \mathbb{1}_n$; equivalently, the spectral abscissa $\lambda_s$ must be equal to zero. Hence, $-L$ is marginally stable of corank $1$.}

$ii) \implies i)$
Let $-L$ be marginally stable of corank $1$ and construct translated matrix $B$ as given by Eqn. \eqref{eq:B}. Since zero is the spectral abscissa of $-L$, the eigenvalue $d$ is dominant. The right and left eigenvectors, $v \text{ and }w$, of $B$ are same as the eigenvectors corresponding to the zero eigenvalue of $L$. Note that both the eigenvectors already satisfy \emph{strict real dominance}. Hence, using Thm. \ref{thm:stronglemma} and Lemma \ref{lemma:vw}, we conclude that $B \in \mathcal{PF}$. Thus, $-L$ is rEEP follows from Thm. \ref{thm:stronglemma}. 
\qed
\bibliography{main}
\bibliographystyle{unsrt}
\vspace{-10mm}
\begin{IEEEbiography}[{\includegraphics[width=1in,height=1.25in,clip,keepaspectratio]{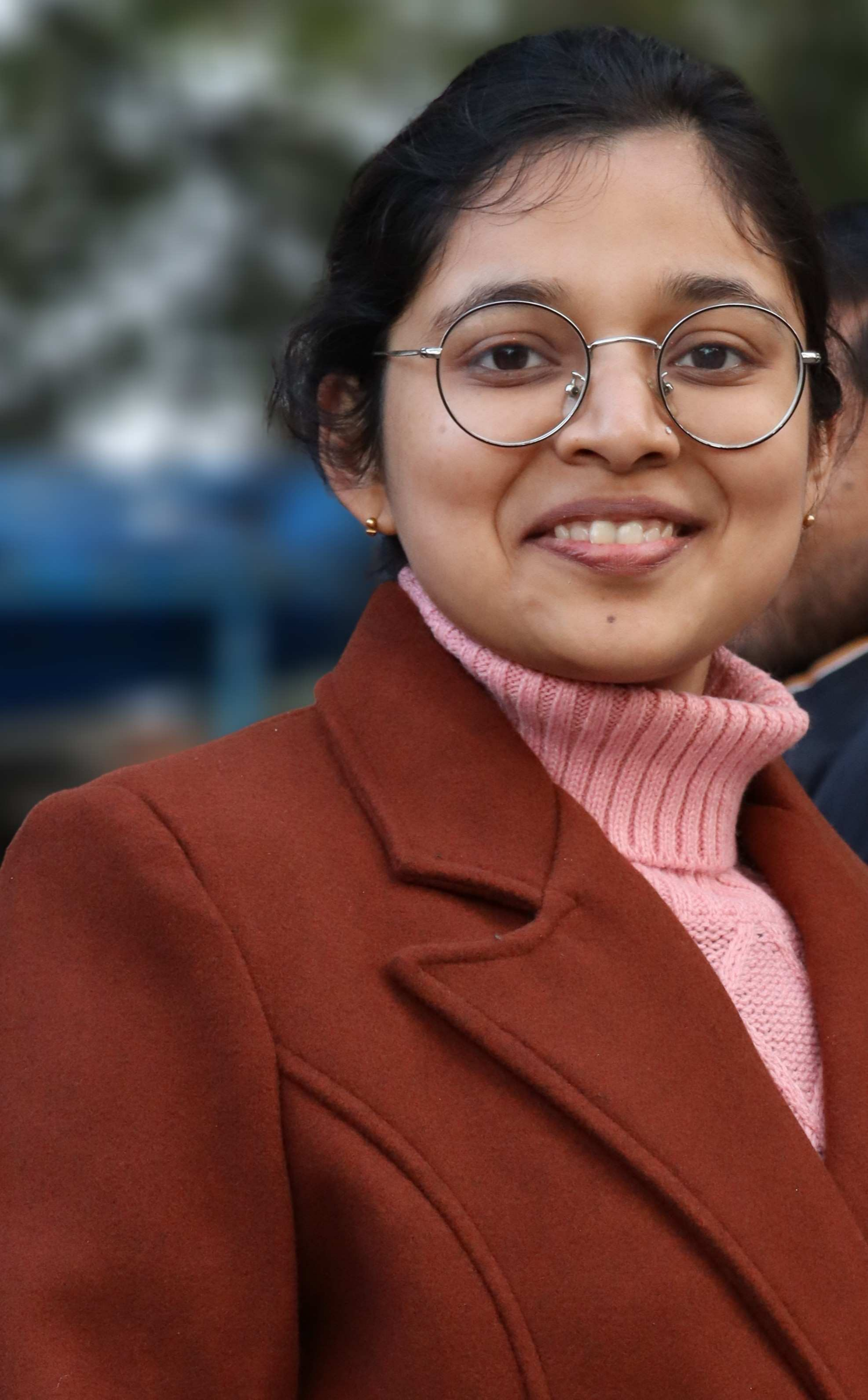}}]{Aditi Saxena} (Student Member, IEEE) recieved her B.Tech degree in Electrical Engineering from Dr. APJ Abdul Kalam Technical University in 2021. Further, she received her M.Tech in Control and Instrumentation from MNNIT Allahabad in 2023. Currently, she is a Ph.D. student in the Department of Electrical Engineering at Indian Institute of Technology Kanpur, India. Her research interests include opinion dynamics in multi-agent systems, network science, control of unmanned aerial vehicles, and nonlinear systems.  
\end{IEEEbiography} \vspace{-10mm}
\begin{IEEEbiography}[{\includegraphics[width=1in,height=1.25in,clip,keepaspectratio]{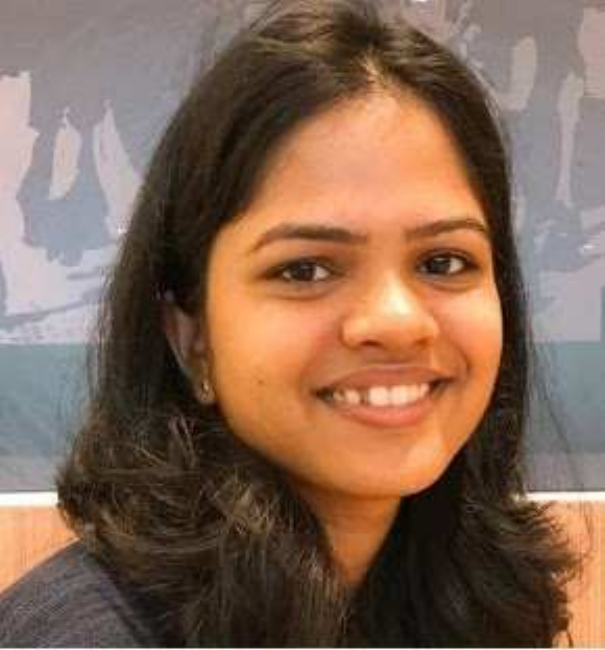}}]{Twinkle Tripathy} 
(Senior Member, IEEE) is currently an Assistant Professor in the Department of Electrical Engineering of IIT Kanpur. She received a Dual Degree of MTech. and Ph.D. at Systems and Control Engineering, IIT Bombay in Dec. 2016. She started her post-doctoral tenure at the School of Electrical and  Electronic Engineering, NTU, Singapore. After serving there for a year, she joined the Faculty of Aerospace Engineering, Technion Israel Institute of Technology as a post-doctoral fellow. Her research interests broadly include control and guidance of autonomous systems, cyclic pursuit strategies and opinion dynamics.
\end{IEEEbiography} \vspace{-20mm}
\begin{IEEEbiography}[{\includegraphics[width=1in,height=1.25in,clip,keepaspectratio]{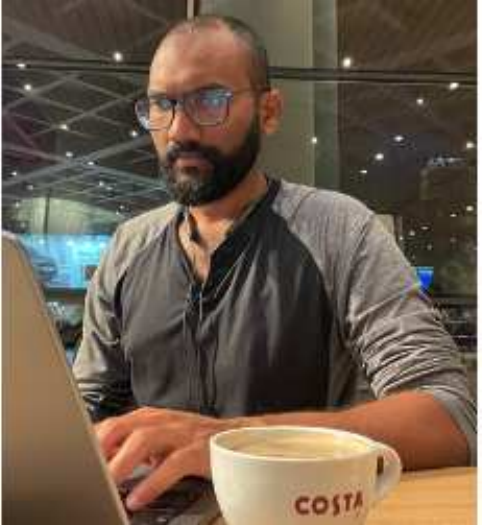}}]{Rajasekhar Anguluri} (Member, IEEE) is an Assistant Professor in the Dept. of Computer Science and Electrical Engineering,  University of Maryland, Baltimore County, MD, USA. He received the B.Tech. degree in electrical engineering from NIT-Warangal, India, in 2013, and the M.S. degree in statistics and the Ph.D. degree in mechanical engineering from the University of California at Riverside, CA, USA, both in 2019.
He was a Postdoctoral Research Scholar with the School of
Electrical, Computer, and Energy Engineering, ASU, Tempe, AZ, USA (20-23). 

His research interests include statistical signal processing, estimation and control, and power
systems. When he is not deeply thinking about matrix mechanics, he thinks deeply on formulating optimization problems on matrices. When he is not thinking deeply on matrices, he can be found in the university library, flipping through old dust jackets. 
\end{IEEEbiography}
\end{document}